\newtheorem{theorem}{Theorem}[section]
\newtheorem*{theorem*}{Theorem}
\newtheorem{lemma}{Lemma}[section]
\newtheorem*{lemma*}{Lemma}
\newtheorem{claim}{Claim}[section]
\newtheorem*{claim*}{Claim}
\newtheorem*{fact*}{Fact}
\newtheorem*{remark*}{Remark}
\newcommand{\be}{\begin{equation}}
\newcommand{\ee}{\end{equation}}
\newcommand{\beq}{\begin{equation*}}
\newcommand{\eeq}{\end{equation*}}
\newcommand{\argmax}{\mathop{\rm argmax}}
\newcommand{\R}{\mathbb{R}}
\newcommand{\dd}[1]{\;\mathrm{d}#1}
\newcommand{\AutoAdjust}[3]{\mathchoice{ \left #1 #2  \right #3}{#1 #2 #3}{#1 #2 #3}{#1 #2 #3} }
\newcommand{\Xcomment}[1]{{}}
\newcommand{\InBrackets}[1]{\AutoAdjust{[}{#1}{]}}
\newcommand{\Ex}[2][]{\operatorname{\mathbb E}_{#1}\InBrackets{#2}}
\newcommand{\Exlong}[2][]{\operatornamewithlimits{\mathbb E}\limits_{#1}\InBrackets{#2}}
\newcommand{\eqdef}{\overset{\mathrm{def}}{=\mathrel{\mkern-3mu}=}}
\newcommand{\vect}[1]{\ensuremath{\bm{#1}}}
\newcommand\restr[2]{{
  \left.\kern-\nulldelimiterspace 
  #1 
  \vphantom{\big|} 
  \right|_{#2} 
  }}
\newcommand{\Acts}{\mathcal{A}}
\newcommand{\Outcome}{\mathcal{Y}}
\title{Optimal Robust Contract Design}
\date{}
\author{
Bo Peng \thanks{ITCS, Key Laboratory of Interdisciplinary Research of Computation and Economics, Shanghai University of Finance and Economics, \texttt{ahqspb@163.sufe.edu.cn, tang.zhihao@mail.shufe.edu.cn}}
\and
Zhihao Gavin Tang \footnotemark[1]
}
\begin{document}

\maketitle

\begin{abstract}
We consider the robust contract design problem when the principal only has limited information about the actions the agent can take. The principal evaluates a contract according to its worst-case performance caused by the uncertain action space. Carroll (AER 2015) showed that a linear contract is optimal among deterministic contracts. Recently, Kambhampati (JET 2023) showed that the principal's payoff can be strictly increased via randomization over linear contracts. In this paper, we characterize the optimal randomized contract, which remains linear and admits a closed form of its cumulative density function.
The advantage of randomized contracts over deterministic contracts can be arbitrarily large even when the principal knows only one non-trivial action of the agent. Furthermore, our result generalizes to the model of contracting with teams, by Dai and Toikka (Econometrica 2022).
\end{abstract}

\section{Introduction}
\label{sec:intro}
In contract theory~\cite{holmstrom1979moral, grossman1992analysis}, a principal aims to incentivize an agent to perform a costly and hidden action through monetary reward. Each action results in a stochastic observable outcome. A contract commits to a transfer rule from the principal to the agent that depends on the observable outcome.
Given a contract, the agent takes the action to maximize her utility, the expected wage obtained from the principal minus the cost of the chosen action. The principal aims to maximize his own payoff, that equals the expected outcome minus the wage.

Carroll~\cite{carroll2015robustness} proposed the robust contract design framework in which the principal is uncertain about the actions the agent can take and aims to maximize the worst-case payoff under his limited knowledge. Carroll proved that within the family of deterministic contracts, there exists an optimal linear contract: paying the agent a constant fraction of the outcome. This model justifies the wide application of linear contracts in practice.
Recently, Kambhampati~\cite{kambhampati2023randomization} proved that a randomization over two linear contracts strictly increases the expected payoff of the principal. Although the established advantage of randomized contracts is only $\varepsilon$ large, this result suggests the potential of randomized contracts.

In this paper, we fully characterize the optimal robust randomized contract. We prove that the optimal randomized contract remains linear, and provide its cumulative distribution function in a closed form. The advantage of randomized contracts over deterministic contracts can be arbitrarily large. On the technical side, we formulate the optimal robust contract design problem as a linear program and observe interesting connections between contract design and mechanism design.
Furthermore, we generalize our randomized linear contract to the team setting by Dai and Toikka~\cite{dai2022robust} in which multiple agents are involved.

After the conference submission of our work, we learn an independent and simultaneous work by Kambhampati, Toikka, and Vohra~\cite{kambhampatTV2024}. They achieve the same randomized contract for the single agent case using a different approach. 
Our analysis is arguably simpler and more direct than their approach, and smoothly generalizes to the setting of teams (i.e., multiple agents). To the best of our knowledge, it is unclear how their argument can be applied to the team setting.

The rest of the paper is organized as follows. Section~\ref{sec:model} provides the robust contract design model by Carroll~\cite{carroll2015robustness}. 
Our main result is established through a ``guess and verify'' approach. 
The guess part is presented in Section~\ref{sec:lp}, which formulates the robust contract design problem as a linear program. 
The verification part is presented in Section~\ref{sec:single-agent}.
Finally, the extension to the team setting is provided in Section~\ref{sec:team}.

\section{Model}
\label{sec:model}
We state the robust contract design framework for a single agent by Carroll~\cite{carroll2015robustness}. The team model shall be provided in Section~\ref{sec:team}.
\paragraph{Robust Contract Design.} A principal contracts with an agent who takes a costly action $a \in \Acts$ that leads to a stochastic outcome $y \in \mathcal{Y}$. 
Only the outcome $y$ but not the action $a$ is observable by the principal. Hence, the payment provided by the principal to the agent can only depend on $y$.
We assume that the outcome is an element of $\mathcal{Y}$, a compact subset of $\R$. A contract is then defined as a function $w: \mathcal{Y} \to \mathbb{R}_+$.

The technology (i.e., the action set) $\Acts$ is a subset of $\Delta(\mathcal{Y}) \times \mathbb{R}_{+}$. That is, each action is a pair $(F,c)$ meaning that the agent pays $c$ and the outcome is drawn from $F$. Given an arbitrary (deterministic) contract $w$, the utility of the agent by taking an action $a=(F,c)$ is 
\[
u(w,a) \eqdef \Exlong[F]{w(y)} - c~.
\]
We assume that the agent always takes the utility-maximizing action. 

The principal only knows a subset $\Acts_0$ of the actual technology $\Acts$ and designs a (randomized) contract to maximize his worst-case payoff:
\[
\max_{w \sim G} \min_{\Acts \supseteq \Acts_0} \left( \Exlong[w\sim G]{\Exlong[y \sim F(w)]{y-w(y)}} \right),
\]
where $a(w) = (F(w),c(w)) \in \Acts$ is a utility-maximizing action of the agent with respect to $w$. 

\paragraph{Tie-breaking.} Note that we need to specify the tie-breaking rule when the agent is indifferent among multiple actions, so that the above payoff is well-defined. Previous works~\cite{carroll2015robustness, kambhampati2023randomization} assume that the agent breaks tie in favor of the principal, i.e., she chooses the best action for the principal. Another important and natural tie-breaking rule is the opposite, in which the agent chooses the worst action for the principal. We refer to the two tie-breaking rules as best and worst tie -breaking respectively and define the following two values of the principal:
\begin{align*}
\text{Best tie-breaking:} \quad & \overline{V_P}(\Acts_0) \eqdef \max_{w \sim G} \min_{\Acts \supseteq \Acts_0} \left( \max_{a(w) \in \argmax_{a\in \Acts} u(w,a)} \left( \Exlong[w\sim G]{\Exlong[y \sim F(w)]{y-w(y)}} \right) \right) \\
\text{Worst tie-breaking:} \quad & \underline{V_P}(\Acts_0) \eqdef \max_{w \sim G} \min_{\Acts \supseteq \Acts_0} \left( \min_{a(w) \in \argmax_{a\in \Acts} u(w,a)} \left( \Exlong[w\sim G]{\Exlong[y \sim F(w)]{y-w(y)}} \right) \right)
\end{align*}
Straightforwardly, $\overline{V_P}(\Acts_0) \ge \underline{V_P}(\Acts_0)$. Our main theorem states that the two values are equal (in the degenerated case) by establishing an upper bound on $\overline{V_P}(\Acts_0)$ and an lower bound on $\underline{V_P}(\Acts_0)$. More details are discussed in Section~\ref{sec:single-agent}.
As an implication, our result suggests that the tie-breaking rule is not important for robust contract design in the general case.

\paragraph{Non-triviality Assumption.} We assume that the known technology $\Acts_0$ includes a null action $(\delta_0, 0)$, representing that the agent can exert no effort, where $\delta_0$ denotes the deterministic distribution of value $0$; and includes at least one \emph{non-trivial} action $(F_0,c_0)$ with $\Ex[F_0]{y} - c_0 > 0$.

Finally, we summarize the timeline of the (randomized) contract design problem:
\begin{enumerate}
    \item The principal commits to a randomized contract with distribution $G$, based on the known technology $\Acts_0$.
    \item The adversary learns $G$ and chooses the technology $\Acts \supseteq \Acts_0$.
    \item A contract $w$ is realized according to $G$ and the agent chooses her utility-maximizing action $a=(F,c)$ from $\Acts$ with respect to $w$.
    \item The outcome $y$ is realized according to $F$ and the principal collects $y-w(y)$.
\end{enumerate}
Note that we assume the adversary (the technology $\Acts$) is non-adaptive to the realization of the randomized contract.
In contrast, an adaptive adversary is allowed to choose the technology $\Acts$ after the realization of $w$. It is straightforward to check that randomized contracts cannot improve over deterministic ones with respect to an adaptive adversary.

\section{Linear Program for Robust Contract Design}
\label{sec:lp}
Our analysis differs from the previous works, but is similar to the alternative approach (Appendix C) of Carroll~\cite{carroll2015robustness}, and the ``unsuccessful'' minimax approach (Appendix A.4) as discussed by Kambhampati~\cite{kambhampati2023randomization}.
We solve the optimal contract design problem by formulating it as a \emph{linear} optimization and observe an interesting connection between contract design and mechanism design. 

Though the materials of this section convey the most important ideas of our approach, the formal proofs in Section~\ref{sec:single-agent} are written self-contained and do not necessarily go through the analysis in this section. We intend to keep the formal proofs as clean as possible, so that it is easy to digest by readers who are not familiar with mechanism design.

Within this section, for the ease of presentation, we assume that 1) the outcome space $\Outcome$ is finite, and 2) the set of all possible payment values is finite, i.e., $w(y)$ is an element of a finite set $S$, for every $y \in \Outcome$. Readers might think of that $S$ is a proper discretization of $\R_+$, e.g., integers in a bounded domain.
Under these assumptions, there are only finite number of different contracts where each contract $w$ corresponds to a vector in $S^\Outcome$, i.e., $w_y = w(y)$.

The principal first chooses a distribution $p \in \Delta(S^{\Outcome})$. The adversary then specifies the technology $\Acts \supseteq \Acts_0$. The second step is equivalent to choose an action $a(w)$ for each contract $w \in S^{\Outcome}$:
\[
a(w) = (\vect{q}(w), c(w)) \in \Delta(\Outcome) \times \R_+,
\]
where $\vect{q}(w)$ is a discrete probability distribution supported on $\Outcome$ (i.e., the outcome equals $y$ with probability $q_y(w)$ for contract $w$), and $c(w)$ is the corresponding cost.
Recall that the action $a(w)$ is a utility-maximizing action for the agent. The actions must satisfy the following incentive compatibility constraints:
\[
u(w, a(w)) \ge u(w, a(w')) \iff \sum_{y \in \Outcome} \left( w_y \cdot q_y(w) \right) - c(w) \ge \sum_{y \in \Outcome} \left( w_y \cdot q_y(w') \right) - c(w'), \quad \forall w,w'
\]
The known technology $\Acts_0$ results in the following boundary constraints that subsume the individual rationality constraints:
\[
u(w, a(w)) \ge \max_{a \in \Acts_0} u(w, a) \iff \sum_{y \in \Outcome} \left( w_y \cdot q_y(w) \right) - c(w) \ge \max_{(\vect{q}, c) \in \Acts_0} \left( \sum_{y \in \Outcome} \left( w_y \cdot q_y \right) - c \right)
\]
We denote the right hand side of the last inequality as a function $\underline{u}: S^\Outcome \to \R_+$:
\[
\underline{u}(w) \eqdef \max_{(\vect{q}, c) \in \Acts_0} \left( \sum_{y \in \Outcome} \left( w_y \cdot q_y \right) - c \right)~.
\]
It serves as a lower bound on the utility of the agent with respect to contract $w$.
To sum up, the robust contract design problem can be formulated as the following max-min optimization:
\begin{align*}
\label{lp:max-min}
\max_{p} \min_{\vect{q},c} : \quad & \sum_{w} p(w) \cdot \left( \sum_{y} (y-w_y) \cdot q_y(w) \right) \tag{Max-Min} \\
\text{subject to} : \quad & \sum_{y \in \Outcome} \left( w_y \cdot q_y(w) \right) - c(w) \ge \sum_{y \in \Outcome} \left( w_y \cdot q_y(w') \right) - c(w'), & \forall w, w' \\
& \sum_{y \in \Outcome} \left( w_y \cdot q_y(w) \right) - c(w) \ge \underline{u}(w), & \forall w \\
& \sum_{y} q_y(w) = 1, & \forall w
\end{align*}

\paragraph{Mechanism design counterpart.}
Readers might have found the first and second family of constraints familiar. Indeed, these are standard incentive compatibility constraints and individual rationality constraints in mechanism design. Consider a monopolist selling a set of goods $\Outcome$ to an additive buyer\footnote{We should think of $\Outcome$ as an abstract set of goods.} . The buyer's valuation of the goods can be represented by a vector $w \in S^\Outcome$.  The monopolist designs a menu $\Acts$ with a restriction that $\Acts_0$ must be part of the menu. Each menu item is consisted of an allocation $\vect{q}$ of the goods and a payment $c$. Let $\vect{q}(\cdot), c(\cdot)$ be the corresponding allocation rule and payment rule of menu $\Acts$. Then they should satisfy the same IC and IR constraints as above. A usual mechanism design problem aims to maximize a certain objective (e.g., the social welfare or the revenue of the monopolist), while the contract design problem is a minimization problem. To the best of our knowledge, this connection between contract design and mechanism design in unknown prior to our work. This intriguing observation allows us to apply tools from mechanism design to solve our contract design problem. 

On the other hand, it is far from obvious that the optimal solution of \eqref{lp:max-min} is a (randomized) linear contract, as the mechanism design counterpart is intrinsically multi-dimensional, which is known to be difficult and often has a complicated optimal solution.
Fortunately, as we shall see in the next section, the optimization of contract design degenerates to the single-dimensional case and we enjoy benefits from the celebrated Myerson's Lemma in mechanism design. 

Coming back to the max-min optimization, notice that the inner optimization is linear in $\vect{q}, c$. We can write the corresponding dual program and claim strong duality:
\begin{align*}
\max_{\lambda, \mu, \theta}: \quad & \sum_{w} \mu(w) \cdot \underline{u}(w) + \sum_{w} \theta(w) \\
\text{subject to:} \quad & \sum_{w'} \left( w_y \cdot \lambda(w, w') - w'_y \cdot \lambda(w', w) \right) + w_y \cdot \mu(w) + \theta(w) \le p(w) \cdot (y-w_y), & \forall w, y \\
& \sum_{w'} \left( \lambda(w', w) - \lambda(w, w')\right) \le \mu(w), & \forall w
\end{align*}
Here, the variables $\lambda(w,w')$ corresponds to the IC constraints; $\mu(w)$ corresponds to the IR constraints; and $\theta(w)$ corresponds to the third family of constraints. This is similar to the duality framework of Cai, Devanur, and Weinberg~\cite{sicomp/CaiDW21}, and the partial Lagrangian analysis of Carroll~\cite{carroll2017robustness}, developed for multi-dimensional mechanism design.

Nevertheless, the exact form of the above optimization is not important. A crucial observation is that the original max-min problem is now transferred to a max-max problem through strong duality, and the max-max optimization is linear as a whole. To sum up, we reach the linear program for optimal contract design:
\begin{align*}
\label{lp:max-max}
\max_{p, \lambda, \mu, \theta}: \quad & \sum_{w} \mu(w) \cdot \underline{u}(w) + \sum_{w} \theta(w) \tag{Max-Max} \\
\text{subject to:} \quad & \sum_{w'} \left( w_y \cdot \lambda(w, w') - w'_y \cdot \lambda(w', w) \right) + w_y \cdot \mu(w) + \theta(w) \le p(w) \cdot (y-w_y), & \forall w, y \\
& \sum_{w'} \left( \lambda(w', w) - \lambda(w, w')\right) \le \mu(w), & \forall w \\
& \sum_{w} p(w) = 1
\end{align*}
Our main result indeed solves this linear program in a closed analytic form, but in an implicit way as we shall use a degenerated single-dimensional linear program\footnote{We remark that we implement this complicated linear program at an early stage of the project that leads us to the convincing conjecture that the optimal randomized contract is linear.}. Moreover, with a slight modification to the above approach, we provide an alternative proof of the linearity of the optimal deterministic contract, i.e., the main theorem of Carroll~\cite{carroll2015robustness}. And the approach smoothly generalizes to the multi-observable outcome setting, for which Carroll believed his alternative approach is difficult to apply. The proofs are provided in Appendix~\ref{app:deterministic}. 

A final remark is that the idea of transforming a \emph{multi-linear} max-min optimization (refer to \eqref{lp:max-min}) into a \emph{linear} max-max optimization (refer to \eqref{lp:max-max}) has appeared in the literature of mechanism design. Specifically, Gravin and Lu~\cite{soda/GravinL18} and Bei et al.~\cite{soda/BeiGLT19} established similar linear programs to characterize correlation-robust mechanism design problems. 
\section{Optimal Randomized Contracts}
\label{sec:single-agent}
In this section, we prove our main theorem that randomized linear contracts emerge as the optimal robust contract. Moreover, the cumulative distribution function of the optimal randomized linear contract admits an analytic form.

From now on, we fix an arbitrary finite technology set $\Acts_0$ with at least one non-trivial action. 
We define the following auxiliary function:
\[
\underline{u}(\alpha) \eqdef \max_{(F_0,c_0) \in \Acts_0} \left( \alpha \cdot \Exlong[F_0]{y} - c_0 \right)~.
\]
This function is a utility lower bound of the agent with respect to the linear contract $w_\alpha(y) \eqdef \alpha \cdot y$, since the actual technology $\Acts$ is a superset of $\Acts_0$ and the agent is assumed to maximizes her own utility. 
Observe that $\underline{u}(\cdot)$ is a continuous non-decreasing function with $\underline{u}(0)=0$ and $0<\underline{u}(1)< \infty$. 
Further, $\underline{u}(\cdot)$ is convex and piece-wise linear. But we are not going to use these properties.

Let $\alpha^*$ denote the critical slope that
\[
\alpha^* \in \argmax_{\alpha \in [0,1]} \frac{\underline{u}(\alpha)}{- \ln (1-\alpha)}~.
\]
Consider the following cumulative distribution function:
\[
G^*(\alpha) \eqdef \frac{\ln{(1-\alpha)}}{\ln{(1-\alpha^*)}}, \quad \forall \alpha \in [0,\alpha^*]~.
\]

\begin{theorem}
\label{thm:main}
The optimal randomized contract achieves an expected payoff of $\frac{\underline{u}(\alpha^*)}{- \ln (1-\alpha^*)}~.$
And it is achieved by a randomized linear contract $w_\alpha(\cdot)$, where $\alpha$ is drawn according to the cumulative distribution function $G^*$.
\end{theorem}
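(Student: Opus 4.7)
The plan is to prove matching bounds $\underline{V_P}(\Acts_0) \geq \frac{\underline{u}(\alpha^*)}{-\ln(1-\alpha^*)} \geq \overline{V_P}(\Acts_0)$, which together with the trivial $\overline{V_P} \geq \underline{V_P}$ pin both values to the theorem's quantity and certify $G^*$ as optimal. Both halves exploit the Myerson-style reinterpretation from Section~\ref{sec:lp}: the slope $\alpha$ of a linear contract acts as a single-dimensional agent ``type'', the expected outcome $m(\alpha)$ of the agent's best response as the allocation, and the cost $c(\alpha)$ as the transfer.

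For the lower bound, fix any $\Acts \supseteq \Acts_0$. For each $\alpha \in [0,\alpha^*]$ let $(F_\alpha, c_\alpha) \in \Acts$ be an agent-optimal action under $w_\alpha$, and set $m(\alpha) = \Ex[F_\alpha]{y}$, $U(\alpha) = \alpha m(\alpha) - c_\alpha$. Revealed preference between distinct slopes is exactly Myerson IC, so $m$ is non-decreasing, $U$ is convex with $U' = m$ a.e., and $U(0)=0$ from the null action; hence $U(\alpha)=\int_0^\alpha m(t)\,dt$, while the nontrivial $\Acts_0$-actions give $U(\alpha) \geq \underline{u}(\alpha)$. The decisive design property of $G^*$ is the identity $(1-\alpha)g^*(\alpha) \equiv 1/(-\ln(1-\alpha^*))$, which collapses the principal's expected payoff to
\[
\int_0^{\alpha^*}(1-\alpha)m(\alpha)g^*(\alpha)\,d\alpha \;=\; \frac{U(\alpha^*)}{-\ln(1-\alpha^*)} \;\geq\; \frac{\underline{u}(\alpha^*)}{-\ln(1-\alpha^*)},
\]
uniformly in $\Acts$ and in the tie-breaking rule.

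For the upper bound, set $K := \frac{\underline{u}(\alpha^*)}{-\ln(1-\alpha^*)}$ and exhibit a single technology $\Acts^\star \supseteq \Acts_0$ that caps the principal's payoff uniformly in $G$. Let $\bar y = \max \Outcome$; augment $\Acts_0$ by the one-parameter family of two-point lotteries on $\{0, \bar y\}$ with mass $p_\alpha = K/((1-\alpha)\bar y)$ on $\bar y$, for $\alpha \in [0, 1-K/\bar y]$, paired with the Myerson payment $c_\alpha = \alpha\,K/(1-\alpha) + K \ln(1-\alpha)$; the allocation is $m(\alpha) = K/(1-\alpha)$ and the resulting indirect utility is $U(\alpha) = -K \ln(1-\alpha)$, which dominates $\underline{u}(\alpha)$ exactly by the defining optimality of $\alpha^*$ in $\underline{u}(\alpha)/(-\ln(1-\alpha))$. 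Against any linear contract the principal collects $(1-\alpha)m(\alpha) \equiv K$; against any (possibly non-linear) contract $w$, the agent's behavior depends only on $A = w(\bar y)$ and $B = w(0)$, and a short case analysis on $(A-B)/\bar y$ (below, inside, or above the feasibility range) bounds the principal's payoff by $K$ in every case.

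The main obstacle I expect is the non-linear-contract portion of the upper bound: verifying that the two-point-lottery family pins down agent behavior cleanly across all $(A,B)$ regimes, in particular at the boundary where the feasibility cap $p_\alpha \leq 1$ binds and the agent's first-order condition $\alpha^{\text{agent}} = (A-B)/\bar y$ has to be replaced by the truncated maximizer. The tie-breaking subtlety---the upper bound must hold against best tie-breaking for the principal---can be absorbed by small $\varepsilon$-perturbations of the costs $c_\alpha$ that break agent indifferences, while the lower bound already holds against worst tie-breaking. Combining the two directions yields the theorem.
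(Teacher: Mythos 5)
Your lower bound is essentially the paper's own argument (the Myerson-style reduction of Lemma~\ref{lem:lower}): monotonicity of the best-response mean, the envelope identity $U(\alpha)=\int_0^\alpha m(t)\dd t$, and the identity $(1-\alpha)g^*(\alpha)\equiv 1/(-\ln(1-\alpha^*))$ collapsing the payoff to $U(\alpha^*)/(-\ln(1-\alpha^*))$. One caveat: it is not ``uniform in the tie-breaking rule.'' When $\alpha^*=0$ the contract $G^*$ degenerates to the zero contract, the target value must be read as $\lim_{\alpha\to 0}\underline{u}(\alpha)/(-\ln(1-\alpha))$, and under worst tie-breaking the agent may take the null action, giving payoff $0$. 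The paper treats this degenerate case separately (Lemma~\ref{lem:degenerate}, using best tie-breaking and Carroll's theorem); your proposal silently skips it.

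The genuine gap is in the upper bound. The technology $\Acts_0$ augmented only with two-point lotteries on $\{0,\bar y\}$ does \emph{not} cap the principal's payoff at $K$: your case analysis on $(A-B)/\bar y$ controls the payoff only when the agent selects one of the new lotteries, but nothing forces that. Take $\Acts_0=\{(\delta_0,0),(\delta_{y_0},c_0)\}$ with $0<c_0<y_0<\bar y$ and $y_0,\bar y\in\Outcome$, and let the principal offer $w(y_0)=c_0+\eps$, $w(0)=w(\bar y)=0$. Every lottery in your family then yields utility $-c_t\le 0$, the null action yields $0$, while $(\delta_{y_0},c_0)$ yields $\eps>0$; so the agent plays the known action and the principal collects $y_0-c_0-\eps$, which exceeds $K$ (e.g.\ $y_0=1$, $c_0=1/2$ gives payoff $\approx 0.5$ versus $K\approx 0.19$). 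The defect is structural: because the tempting deviations live only on $\{0,\bar y\}$, the incentive constraints involve only $w(0)$ and $w(\bar y)$ and cannot lower-bound $\Ex[F_w]{w(y)}$ for the distribution the agent actually plays, which is exactly what the bound needs. The paper's construction avoids this by making every mean level $e^*(\alpha)$ available with \emph{every} distribution of that mean at cost $c^*(\alpha)$ --- equivalently (see the remark after Lemma~\ref{lem:upper}) by including the scaled-down mixtures $t\cdot F+(1-t)\cdot\delta_0$ of every $F$ in the technology, in particular of the known actions, priced at $c^*\bigl((e^*)^{-1}(t\cdot e^*(\alpha))\bigr)$; then the first-order condition at $t=1$ applied to the agent's actually chosen action gives $\Ex[F_w]{w(y)}\ge \alpha_w\cdot e^*(\alpha_w)$ for an arbitrary contract $w$, whence the payoff is at most $(1-\alpha_w)\,e^*(\alpha_w)=K$. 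Your upper-bound step fails as stated and needs to be replaced by a construction of this richer type.
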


First, we establish a lower bound on the expected payoff of our randomized contract. This involves a subtle tie-breaking issue depending on whether $\alpha^*=0$. Notice that our randomized contract degenerates to a deterministic one when $\alpha^*=0$ (i.e., $G^*(0)=1$). Nevertheless, we show that the claimed expected payoff coincides with the optimal deterministic payoff of Carroll~\cite{carroll2015robustness} when the agent breaks tie in favor of the principle. 
\begin{lemma}
\label{lem:degenerate}
    If $\alpha^*=0$, then $\overline{V_P}(\Acts_0) \ge \max_{(F_0,c_0) \in \Acts_0} \left( \sqrt{\Ex[F_0]{y}} - \sqrt{c_0}\right)^2 = \frac{\underline{u}(\alpha^*)}{- \ln (1-\alpha^*)}$, which is achieved by the zero-slope linear contract $w_0(\cdot)$.
\end{lemma}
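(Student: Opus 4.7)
The plan is to exploit the convexity of $\underline{u}$ together with the sharp elementary inequality $\ln x\le x-1$ to establish the displayed equality, and then to verify the lower bound on $\overline{V_P}(\Acts_0)$ by analysing the zero-slope contract $w_0$ directly under best tie-breaking.

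Write $M_0 := \max_{(F,0)\in\Acts_0}\Ex[F]{y}$ for the maximum expected outcome among cost-$0$ actions of $\Acts_0$. Since $\underline{u}(\alpha) = \max_{(F,c)\in\Acts_0}(\alpha\Ex[F]{y}-c)$ is a pointwise maximum of finitely many linear functions vanishing at $\alpha=0$, its right derivative at $0$ equals $M_0$, and hence $\lim_{\alpha\to 0^+}\underline{u}(\alpha)/(-\ln(1-\alpha)) = M_0$; this limit is the value one naturally assigns to the ratio at $\alpha^*=0$. The hypothesis $\alpha^*=0$ then unpacks into the functional inequality
\[
\underline{u}(\alpha) \;\le\; M_0\cdot\bigl(-\ln(1-\alpha)\bigr)\qquad\text{for every }\alpha\in(0,1],
\]
and non-triviality rules out $M_0=0$ (otherwise some non-trivial action would force the ratio to be strictly positive somewhere in $(0,1)$, contradicting that the supremum equals $0$).

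With this in hand, I would identify Carroll's formula with $M_0$. The inequality $\max_{(F,c)\in\Acts_0}(\sqrt{\Ex[F]{y}}-\sqrt{c})^2\ge M_0$ is immediate, since each cost-$0$ action $(F,0)$ contributes $\Ex[F]{y}$. For the reverse direction, I fix $(F,c)\in\Acts_0$ with $\mu:=\Ex[F]{y} > c$ and set $\alpha^\dagger := \sqrt{c/\mu}\in(0,1)$, which is precisely the slope that optimises $(1-\alpha)(\alpha\mu-c)/\alpha$ for this action. A direct computation yields
\[
(\sqrt\mu-\sqrt c)^2 \;=\; \frac{(1-\alpha^\dagger)(\alpha^\dagger\mu - c)}{\alpha^\dagger} \;\le\; \frac{(1-\alpha^\dagger)\,\underline{u}(\alpha^\dagger)}{\alpha^\dagger} \;\le\; M_0\cdot\frac{(1-\alpha^\dagger)(-\ln(1-\alpha^\dagger))}{\alpha^\dagger} \;\le\; M_0,
\]
where the middle inequality is the functional bound above and the last is the sharp calculus estimate $(1-\alpha)(-\ln(1-\alpha))\le\alpha$ on $[0,1)$, equivalent via $t=1-\alpha$ to the textbook inequality $\ln(1/t)\le 1/t-1$. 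The degenerate case $c=0$ gives $(\sqrt\mu)^2=\mu\le M_0$ trivially. Taking the maximum over $\Acts_0$ yields Carroll's formula $\le M_0$, closing the equality.

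Finally, for the bound $\overline{V_P}(\Acts_0)\ge M_0$, I would check the zero-slope contract $w_0\equiv 0$ directly. Under $w_0$ the agent's utility from any action $(F,c)$ is $-c$, so the utility-maximising actions in $\Acts$ are exactly its cost-$0$ actions (non-empty because the null action is always present). Under best tie-breaking the agent picks the cost-$0$ action with the largest expected outcome; since $\Acts\supseteq\Acts_0$, this outcome is at least $M_0$, and the principal's payoff is at least $M_0$ irrespective of how the adversary enlarges $\Acts$. The main obstacle is the middle step: packaging the qualitative hypothesis $\alpha^*=0$ into the sharp functional inequality on $\underline{u}$ and marrying it with the tight calculus bound to recover $(\sqrt\mu-\sqrt c)^2\le M_0$ action-by-action; the first and last steps are then essentially bookkeeping.
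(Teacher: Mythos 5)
Your proof is correct, and it reaches the same two milestones as the paper's proof --- identifying the value $\frac{\underline{u}(\alpha^*)}{-\ln(1-\alpha^*)}$ at $\alpha^*=0$ with the expected outcome $M_0$ of a best zero-cost known action, and showing that $M_0$ dominates $\left(\sqrt{\Ex[F_0]{y}}-\sqrt{c_0}\right)^2$ action by action --- but it executes both steps differently. For the action-by-action bound, the paper defines $h(\alpha)=\Ex[F^*]{y}\cdot(-\ln(1-\alpha))-\Ex[F_0]{y}\cdot\alpha+c_0$, evaluates it at its minimizer $\alpha=1-\Ex[F^*]{y}/\Ex[F_0]{y}$, and closes with the inequality $1-\sqrt{1-x+x\ln x}\le\sqrt{x}$; you instead evaluate the functional inequality $\underline{u}(\alpha)\le M_0\cdot(-\ln(1-\alpha))$ at the Carroll-optimal slope $\alpha^\dagger=\sqrt{c/\mu}$ and finish with $(1-\alpha)(-\ln(1-\alpha))\le\alpha$, i.e.\ $\ln s\le s-1$, which is a slicker and more elementary calculation. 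For the payoff bound, the paper invokes Carroll's theorem as a black box (the optimal deterministic contract has slope $\sqrt{c^*/\Ex[F^*]{y}}=0$ and attains the benchmark), whereas you verify directly that under $w_0\equiv 0$ every utility-maximizing action has zero cost, so best tie-breaking yields payoff at least $M_0$ for every $\Acts\supseteq\Acts_0$; this makes the lower bound self-contained rather than dependent on Carroll's result. Both routes share the implicit convention (also present in the statement of Carroll's formula) that the relevant maximizer satisfies $\Ex[F_0]{y}\ge c_0$, so your case split ($c=0$, $0<c<\mu$) is as complete as the paper's.
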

In the non-degenerate case when $\alpha^* > 0$, we establish the stated payoff even when the agent breaks tie in the worst case. 
\begin{lemma}
\label{lem:lower}
If $\alpha^* > 0$, $\underline{V_P}(\Acts_0) \ge \frac{\underline{u}(\alpha^*)}{- \ln (1-\alpha^*)}$. Specifically, the randomized linear contract $w_\alpha(\cdot)$ achieves an expected payoff of $\frac{\underline{u}(\alpha^*)}{- \ln (1-\alpha^*)}$, where $\alpha$ is drawn according to the cumulative distribution function $G^*$.
\end{lemma}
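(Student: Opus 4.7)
The plan is to fix an arbitrary adversarial technology $\Acts \supseteq \Acts_0$ and bound the resulting expected payoff of the principal from below by $\underline{u}(\alpha^*)/(-\ln(1-\alpha^*))$, uniformly over all worst-case tie-breaking rules. To this end, I would work with the agent's indirect utility function
\[
u(\alpha) \eqdef \max_{(F,c)\in\Acts}\bigl(\alpha\cdot \Ex[F]{y}-c\bigr),
\]
which, as a pointwise maximum of affine functions of $\alpha$, is convex and non-decreasing on $[0,\alpha^*]$. Let $z(\alpha) \eqdef \Ex[F_\alpha]{y}$ denote the expected outcome of whichever utility-maximizing action $(F_\alpha,c_\alpha)\in\Acts$ the adversary arranges for the agent to pick at realization $\alpha$. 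A standard envelope argument places $z(\alpha)$ in the subdifferential $\partial u(\alpha)$.

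The heart of the proof is a one-line computation. Differentiating $G^*$ yields the density $g^*(\alpha)=1/[(1-\alpha)\cdot(-\ln(1-\alpha^*))]$ on $[0,\alpha^*]$, and so the weight $(1-\alpha)\, g^*(\alpha)$ is the \emph{constant} $1/(-\ln(1-\alpha^*))$. The principal's expected payoff therefore telescopes:
\[
\int_0^{\alpha^*}(1-\alpha)\,z(\alpha)\,g^*(\alpha)\,d\alpha \;=\; \frac{1}{-\ln(1-\alpha^*)}\int_0^{\alpha^*}z(\alpha)\,d\alpha \;=\; \frac{u(\alpha^*)-u(0)}{-\ln(1-\alpha^*)},
\]
where the last equality uses the fundamental theorem of calculus for a measurable selection from the subdifferential of a convex function. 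To finish, I would invoke two boundary facts: individual rationality with respect to any action in $\Acts_0$ gives $u(\alpha^*)\ge \underline{u}(\alpha^*)$, and the null action $(\delta_0,0)\in\Acts_0$ combined with non-negativity of costs pins down $u(0)=0$. Putting these together yields the claimed lower bound.

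The main subtlety I anticipate is the rigorous handling of worst-case tie-breaking at slopes where $u$ fails to be differentiable. At such kinks the adversary could make the agent realize the smallest available subgradient value, namely the left-derivative $u'_-(\alpha)$ rather than $u'_+(\alpha)$. However, convexity of $u$ forces the set of kinks to be at most countable, hence Lebesgue-null, so the identity $\int_0^{\alpha^*}z(\alpha)\,d\alpha = u(\alpha^*)-u(0)$ persists for \emph{any} measurable selection $z(\alpha)\in\partial u(\alpha)$. This single observation dissolves the apparent dependence on the tie-breaking rule and allows the same bound to hold uniformly in $\Acts$, so the conclusion $\underline{V_P}(\Acts_0)\ge \underline{u}(\alpha^*)/(-\ln(1-\alpha^*))$ follows by taking the infimum over adversarial technologies.
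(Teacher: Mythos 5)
Your proposal is correct and follows essentially the same route as the paper: the paper packages the argument as the linear program \eqref{eqn:lp1}/\eqref{eqn:lp2}, where Claim~\ref{clm:lp2} is exactly your envelope identity (the chosen action's expected outcome is a subgradient of the agent's indirect utility, so $\int_0^{\alpha^*} e(t)\dd t \ge u(\alpha^*)-u(0) \ge \underline{u}(\alpha^*)$), and the final computation with the constant weight $(1-\alpha)\dd G^*(\alpha) = \dd\alpha/(-\ln(1-\alpha^*))$ is identical. Your subdifferential treatment of kinks is in fact a slightly cleaner handling of the differentiability issue that the paper dispatches in a footnote.
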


Second, we prove that no contract can achieve a better payoff with respect to an explicitly constructed technology $\Acts \supseteq \Acts_0$. Our main theorem is then a consequence of the two steps.
\begin{lemma}
\label{lem:upper}
$\overline{V_P}(\Acts_0) \le \frac{\underline{u}(\alpha^*)}{- \ln (1-\alpha^*)}$. Specifically, there exists a technology $\Acts \supseteq \Acts_0$ such that no contract achieves a payoff larger than $\frac{\underline{u}(\alpha^*)}{- \ln (1-\alpha^*)}$ with respect to $\Acts$.
\end{lemma}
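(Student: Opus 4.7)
\textbf{Proof Proposal for Lemma~\ref{lem:upper}.}

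The plan is to exhibit a single adversarial technology $\Acts^\star \supseteq \Acts_0$ that caps every contract's payoff at $V := \underline{u}(\alpha^*)/(-\ln(1-\alpha^*))$, which directly upper bounds $\overline{V_P}(\Acts_0)$. Guided by the LP duality of Section~\ref{sec:lp}, I take
\[
\Acts^\star := \Acts_0 \cup \{(F, c(\bar{y})) : F \in \Delta(\Outcome),\, \bar{y} := \Ex[F]{y} \in [V, \max \Outcome]\},
\]
where $c(\bar{y}) := \bar{y} - V - V\ln(\bar{y}/V)$ is the Legendre conjugate of the indirect utility $-V\ln(1-\alpha)$ induced by $G^*$. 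Elementary calculations show $c(V) = 0$, $c$ is convex and non-decreasing on $[V, \infty)$, and $c(\bar{y}) \ge 0$, so these are valid actions.

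Fix an arbitrary contract $w$ and let $\hat{w}(\bar{y}) := \sup_{F : \Ex[F]{y} = \bar{y}} \Ex[F]{w(y)}$ be the concave envelope of $w$ over distributions of given mean; $\hat{w}$ is concave with $\hat{w}(0) = w(0) \ge 0$. The agent's best added response is some $\bar{y}^* \in \argmax_{\bar{y} \ge V}(\hat{w}(\bar{y}) - c(\bar{y}))$. At an interior $\bar{y}^*$, the first-order condition picks $s \in \partial \hat{w}(\bar{y}^*)$ with $s = c'(\bar{y}^*) = 1 - V/\bar{y}^*$, so $\bar{y}^* = V/(1-s)$; combined with concavity $\hat{w}(\bar{y}^*) \ge \hat{w}(0) + s\bar{y}^* \ge s\bar{y}^*$, the principal's payoff from this added action is
\[
\bar{y}^* - \hat{w}(\bar{y}^*) \le (1-s)\bar{y}^* = V.
\]
Boundary cases ($\bar{y}^* \in \{V, \max\Outcome\}$) are handled by analogous direct computations using the one-sided FOC.

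It remains to show the agent weakly prefers the added actions over any $(F_0, c_0) \in \Acts_0$ in the relevant regime, so the principal cannot exploit a profitable deviation to $\Acts_0$. For $\bar{y}_0 := \Ex[F_0]{y} \ge V$, setting $\alpha := 1 - V/\bar{y}_0 \in [0,1)$ yields $\underline{u}(\alpha) \ge \alpha \bar{y}_0 - c_0 = \bar{y}_0 - V - c_0$ from the definition of $\underline{u}$, and $\underline{u}(\alpha) \le -V\ln(1-\alpha) = V\ln(\bar{y}_0/V)$ from the defining property of $\alpha^*$. Combining these yields the key \emph{cost-dominance} $c_0 \ge c(\bar{y}_0)$, whence the agent's utility at $(F_0, c_0)$ is at most $\hat{w}(\bar{y}_0) - c_0 \le \hat{w}(\bar{y}_0) - c(\bar{y}_0) \le \sup_{\bar{y} \ge V}(\hat{w}(\bar{y}) - c(\bar{y}))$, her best added utility. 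For $\bar{y}_0 < V$, the principal's payoff at $(F_0, c_0)$ is trivially bounded by $\bar{y}_0 < V$. In particular, in any tie-breaking scenario the principal's payoff is at most $V$; the bound for randomized contracts then follows by linearity of expectation. The main technical step, and the main obstacle, is the cost-dominance inequality: this is precisely what makes the particular choice of $c(\cdot)$ match the support structure of the optimal $G^*$ and ensures the argument goes through uniformly against all (possibly non-linear) contracts, reflecting the Myerson-envelope single-dimensional reduction foreshadowed in Section~\ref{sec:lp}.
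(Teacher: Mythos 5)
Your proposal is correct and is essentially the paper's own argument: writing $V=\underline{u}(\alpha^*)/(-\ln(1-\alpha^*))$, your cost frontier $c(\bar y)=\bar y - V - V\ln(\bar y/V)$ is exactly the paper's $c^*(\alpha)$ under the reparameterization $\bar y = V/(1-\alpha)$, and your ``cost-dominance'' inequality $c_0\ge c(\bar y_0)$ is the same consequence of the definition of $\alpha^*$ that the paper uses to prove $\Acts_0\subseteq\Acts$. The only cosmetic difference is the final step: the paper lower-bounds the expected wage by differentiating the agent's utility along the mixtures $tF_w+(1-t)\delta_0$ at $t=1$, whereas you reach the same bound $\hat w(\bar y^*)\ge \bar y^*-V$ via a subgradient first-order condition for the concavified contract anchored at $\hat w(0)\ge 0$; both hinge on the availability of the null outcome in the same way.
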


\subsection{Proof of Lemma~\ref{lem:degenerate}}
We first prove the equality $\max_{(F_0,c_0) \in \Acts_0} \left( \sqrt{\Ex[F_0]{y}} - \sqrt{c_0}\right)^2 = \frac{\underline{u}(\alpha^*)}{- \ln (1-\alpha^*)}$ as stated in the lemma. 
Since $\alpha^*=0$, there must exist an action $(F^{*},c^{*}) \in \Acts_0$ with $c^*=0$ and
\begin{equation}
\label{eqn:h}
\Exlong[F^*]{y} = \lim_{\alpha \to 0}\frac{\underline{u}(\alpha)}{- \ln (1-\alpha)} \ge \frac{\underline{u}(\alpha)}{- \ln (1-\alpha)}, \quad  \forall \alpha \in [0,1]~.
\end{equation}
Hence, it suffices to prove that
\begin{equation}
\label{eqn:degenerate}
\Exlong[F^*]{y} \ge \left( \sqrt{\Exlong[F_0]{y}} - \sqrt{c_0}\right)^2, \quad \forall (F_0,c_0) \in \Acts_0~.
\end{equation}
Fix an arbitrary $(F_0,c_0) \in \Acts_0$.
If $\Ex[F^{*}]{y} \ge \Ex[F_0]{y}$, we have $ \sqrt{\Ex[F^{*}]{y}} \ge \sqrt{\Ex[F_0]{y}} \ge \sqrt{\Ex[F_0]{y}}-\sqrt{c_0}$.
Else, let $h(\alpha) \eqdef \Ex[F^*]{y} \cdot (- \ln (1-\alpha)) - \Ex[F_0]{y} \cdot \alpha + c_0$. By \eqref{eqn:h}, we have $h(\alpha) \ge 0$ for every $\alpha \in [0,1]$.
Note that $h$ attains its minimum at $\alpha=1-\frac{\Ex[F^{*}]{y}}{\Ex[F_0]{y}}$. Thus, we have
\begin{multline*}
h\left(1-\frac{\Ex[F^{*}]{y}}{\Ex[F_0]{y}}\right) =  \Exlong[F^{*}]{y} \cdot \left( -\ln\left( \frac{\Ex[F^{*}]{y}}{\Ex[F_0]{y}}\right)\right) -\Exlong[F_0]{y} +\Exlong[F^{*}]{y} +c_0 \ge 0 \\
\Longrightarrow 
\sqrt{\Exlong[F_0]{y}}-\sqrt{c_0} \le \sqrt{\Exlong[F_0]{y}} - \sqrt{\Exlong[F_0]{y} -\Exlong[F^{*}]{y} +\Exlong[F^{*}]{y} \cdot \ln\left( \frac{\Ex[F^{*}]{y}}{\Ex[F_0]{y}}\right)} \\
= \sqrt{\Exlong[F_0]{y}} \cdot \left(1-\sqrt{1-\frac{\Ex[F^{*}]{y}}{\Ex[F_0]{y}}+\frac{\Ex[F^{*}]{y}}{\Ex[F_0]{y}} \cdot \ln\left(\frac{\Ex[F^{*}]{y}}{\Ex[F_0]{y}}\right)}\right) \le \sqrt{\Exlong[F_0]{y}} \cdot \sqrt{\frac{\Ex[F^{*}]{y}}{\Ex[F_0]{y}}} = \sqrt{\Exlong[F^*]{y}},
\end{multline*}
where the last inequality follows from the mathematical fact that $1-\sqrt{1-x+x\cdot \ln x} \le \sqrt{x}$ for $x \in [0,1]$. This finishes the proof of equation~\eqref{eqn:degenerate}.
In other words, $(F^*,c^*)$ is the maximizer of 
\[
\max_{(F_0,c_0) \in \Acts_0} \left( \sqrt{\Exlong[F_0]{y}} - \sqrt{c_0}\right)^2~.
\]
Together with the main theorem of Carroll~\cite{carroll2015robustness} finishes the proof of our lemma.
\begin{theorem*}[Carroll~\cite{carroll2015robustness}]
The optimal deterministic contract achieves a payoff of 
\[
\max_{(F_0,c_0)\in \Acts_0} \left( \sqrt{\Exlong[F_0]{y}} - \sqrt{c_0}\right)^2
\]
when the agent breaks tie in favor of the principle. Moreover, it is achieved by a linear contract with slope $\sqrt{\frac{c^*}{{\Ex[F^*]{y}}}}$, where
\[
(F^*,c^*) \in \argmax_{(F_0,c_0) \in \Acts_0} \left( \sqrt{\Ex[F_0]{y}} - \sqrt{c_0}\right)^2~.
\]
\end{theorem*}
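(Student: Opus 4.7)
The plan is to prove Carroll's theorem by matching lower and upper bounds on the principal's robust payoff, leveraging the LP-duality framework already set up in Section~\ref{sec:lp} (restricted to deterministic contracts).

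For the achievability direction, I would fix any base action $(F_0, c_0) \in \Acts_0$ with $\Ex[F_0]{y} > 0$ and analyze the linear contract $w_\alpha(y) = \alpha y$ with slope $\alpha = \sqrt{c_0/\Ex[F_0]{y}}$. Under any technology $\Acts \supseteq \Acts_0$, the agent's chosen action $(F,c)$ must satisfy $\alpha \Ex[F]{y} - c \geq \alpha \Ex[F_0]{y} - c_0$ since $(F_0,c_0)$ remains available. Combined with $c \geq 0$, this yields $\Ex[F]{y} \geq \Ex[F_0]{y} - c_0/\alpha$, so the principal's payoff $(1-\alpha)\Ex[F]{y}$ is at least $(1-\alpha)(\Ex[F_0]{y} - c_0/\alpha)$. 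Substituting $\alpha = \sqrt{c_0/\Ex[F_0]{y}}$ and simplifying algebraically yields $(\sqrt{\Ex[F_0]{y}} - \sqrt{c_0})^2$. Maximizing over $(F_0, c_0) \in \Acts_0$ identifies the optimal slope as $\sqrt{c^*/\Ex[F^*]{y}}$ at the maximizer $(F^*, c^*)$.

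For the matching upper bound, I would adapt the LP formulation of Section~\ref{sec:lp} to the deterministic case: the principal commits to a single $w \in S^{\Outcome}$ and the adversary chooses an action $(\vect{q},c)$ respecting IC and IR with respect to $\Acts_0$. Strong LP duality converts the inner min into a max, producing a max--max linear program. The plan is then to exhibit an explicit dual-feasible witness whose objective equals $\max_{(F_0,c_0) \in \Acts_0}(\sqrt{\Ex[F_0]{y}} - \sqrt{c_0})^2$, certifying the upper bound. The natural guess, guided by the achievability analysis, is to concentrate the Lagrange multipliers on the base action $(F^*, c^*)$ and on outcomes aligned with the linear contract of slope $\sqrt{c^*/\Ex[F^*]{y}}$. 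As a backup route, if the dual construction becomes unwieldy, one can give an explicit adversarial technology: augment $\Acts_0$ with a deterministic ``mimic'' action $(\delta_{y^\dagger}, c_{y^\dagger})$ with $c_{y^\dagger} = w(y^\dagger) - (\Ex[F^*]{w(y)} - c^*)$ whenever non-negative, making the agent indifferent to $(F^*, c^*)$ so that the principal's payoff is bounded by $\min_{y^\dagger}\bigl(y^\dagger - w(y^\dagger)\bigr)$ over feasible $y^\dagger$, and then optimize.

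The main obstacle is the upper bound step, because the certificate must cap the payoff uniformly across all (possibly highly non-linear) contracts $w$. I expect the sharp form $(\sqrt{\cdot} - \sqrt{\cdot})^2$ to emerge from an AM--GM-style inequality once the optimization collapses to a single trade-off variable between the mimic outcome and its associated cost; on the LP side, complementary slackness should simultaneously force the optimal primal $w$ to be linear, recovering Carroll's structural conclusion. Handling the tie-breaking in favor of the principal requires care, since the adversary's mimic action must either strictly beat the base action or rely on the tie-breaking rule explicitly --- a point I would verify by a standard $\varepsilon$-perturbation argument.
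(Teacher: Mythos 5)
Your achievability direction is correct and essentially the standard argument: with slope $\alpha=\sqrt{c_0/\Ex[F_0]{y}}$, the IC constraint against $(F_0,c_0)$ plus $c\ge 0$ gives $\Ex[F]{y}\ge \Ex[F_0]{y}-c_0/\alpha$, and the algebra indeed yields $(\sqrt{\Ex[F_0]{y}}-\sqrt{c_0})^2$ (you only need a side remark for the corner case $c^*=0$, where $\alpha=0$ and the claimed payoff genuinely relies on tie-breaking in favor of the principal). The problem is the upper bound, which you leave as a plan, and the concrete fallback you describe does not work. A single deterministic ``mimic'' action $(\delta_{y^\dagger},c_{y^\dagger})$ only certifies the bound $\inf\{\,y-w(y): y\in\Outcome,\ w(y)>u_0\,\}$, where $u_0=\max_{(F_0,c_0)\in\Acts_0}(\Ex[F_0]{w(y)}-c_0)$, and this is strictly weaker than Carroll's value for some contracts. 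Concretely, take $\Outcome=\{0,1\}$, $\Acts_0=\{(\delta_0,0),(\delta_1,c_0)\}$ with $c_0\in(0,1)$, and the contract $w(0)=0$, $w(1)=c_0+\eps$. Your mimic bound gives $1-c_0-\eps$, which exceeds $(1-\sqrt{c_0})^2$, so it cannot show that this contract fails to beat Carroll's benchmark. The adversarial actions that actually pin the payoff down are \emph{scaled-down mixtures} of the agent's preferred known action with the null action, i.e.\ distributions $tF_1+(1-t)\delta_0$ with cost just below $t\,\Ex[F_1]{w(y)}-u_0+c_1$-type thresholds; in the example these drive the payoff to $(1-c_0/w(1))(1-w(1))\approx 0$. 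Without this one-parameter family of deviations the uniform cap over all nonlinear $w$ cannot be obtained.

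Your primary route (a dual witness for the deterministic max--max LP) is the right spirit --- it is essentially what the paper does in Appendix~\ref{app:deterministic} --- but your proposed witness is misguided: the paper's Lagrangian certificate plugs in $F=\frac{\lambda}{\lambda+1}F_1+\frac{1}{\lambda+1}\delta_0$, where $(F_1,c_1)$ is the known action maximizing $\Ex[F_0]{w(y)}-c_0$ \emph{under the given contract} $w$, and only afterwards relaxes to the max over $\Acts_0$; concentrating multipliers on the fixed action $(F^*,c^*)$ is not dual feasible for contracts under which $(F^*,c^*)$ is not the agent's preferred known action, so that guess would not certify the bound. In short: the lower bound is fine (and more elementary than the paper's Lagrangian derivation of it), but the upper bound as proposed has a genuine gap --- the missing idea is that the adversary's deviations must be proportional scalings of the contract-dependent best known action toward $\delta_0$, not point-mass mimics, and the dual multipliers must track that contract-dependent action.
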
 
\subsection{Proof of Lemma~\ref{lem:lower}}

We establish a linear program that lower bounds the expected payoff of any randomized \emph{linear} contract.
\begin{claim}
\label{clm:lp1}
For an arbitrary randomized linear contract with cumulative distribution function $G$ (defined on the slope), the expected payoff of the principal is lower bounded by the following program.
\begin{align}
\label{eqn:lp1}
\min_{e, c}: \quad & \int_0^1 (1-\alpha) \cdot e(\alpha) \dd G(\alpha) \tag{P1}\\
{\normalfont \text{subject to}}: \quad & \alpha \cdot e(\alpha) - c(\alpha) \ge \alpha \cdot e(\alpha') - c(\alpha') & \forall \alpha, \alpha' \in [0,1] \notag \\
& \alpha \cdot e(\alpha) - c(\alpha) \ge \underline{u}(\alpha) & \forall \alpha \in [0,1] \notag \\
& e(\alpha), c(\alpha) \ge 0 & \forall \alpha \in [0,1] \notag
\end{align}
\end{claim}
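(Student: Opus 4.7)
The plan is to fix a randomized linear contract with CDF $G$ and an arbitrary technology $\Acts \supseteq \Acts_0$, then read off from the agent's best responses two nonnegative functions $e, c : [0,1] \to \R$ that are feasible for \eqref{eqn:lp1} and whose objective value equals the principal's expected payoff against $\Acts$. Taking the infimum over feasible $(e, c)$ then yields a lower bound on the payoff uniform in $\Acts$, which is precisely the claim.

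Concretely, for every $\alpha \in [0,1]$ (including slopes outside $\supp(G)$) pick any action $a(\alpha) = (F(\alpha), c(\alpha)) \in \Acts$ that maximizes the agent's utility against $w_\alpha$ under the prevailing tie-breaking rule, and set $e(\alpha) \eqdef \Exlong[F(\alpha)]{y}$. Non-negativity of $e$ and $c$ is immediate from the model. The expected payoff of the principal is
\[
\Exlong[\alpha \sim G]{\Exlong[F(\alpha)]{y - \alpha y}} \;=\; \int_0^1 (1-\alpha) \cdot e(\alpha) \dd G(\alpha),
\]
which matches the objective of \eqref{eqn:lp1}. For feasibility, the IC inequality $\alpha \cdot e(\alpha) - c(\alpha) \ge \alpha \cdot e(\alpha') - c(\alpha')$ is the statement that the agent at slope $\alpha$ weakly prefers $a(\alpha)$ to $a(\alpha') \in \Acts$; and the IR inequality $\alpha \cdot e(\alpha) - c(\alpha) \ge \underline{u}(\alpha)$ follows from $\Acts_0 \subseteq \Acts$, since the agent can always deviate to her best action in $\Acts_0$, which yields utility exactly $\underline{u}(\alpha)$ by definition.

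I do not expect a genuine obstacle here. The one conceptual move worth emphasizing is that \eqref{eqn:lp1} enforces IC for all pairs $\alpha, \alpha' \in [0,1]$, not only those in $\supp(G)$; this comes for free because $\Acts$ is the same at every slope, so $(e, c)$ can be defined pointwise on $[0,1]$ via utility-maximizing actions against each hypothetical $w_\alpha$. The sole minor technicality is measurability of $\alpha \mapsto (e(\alpha), c(\alpha))$, which can be ensured by a standard measurable-selection argument and does not affect the integral, since only values on $\supp(G)$ contribute to the objective.
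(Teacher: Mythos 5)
Your proposal is correct and follows essentially the same route as the paper: fix an arbitrary technology $\Acts \supseteq \Acts_0$, define $e(\alpha)$ and $c(\alpha)$ from the agent's utility-maximizing action against each linear contract $w_\alpha$, observe that the principal's expected payoff equals the objective, and derive the IC constraints from optimality of $a(\alpha)$ within $\Acts$ and the boundary constraints from $\Acts_0 \subseteq \Acts$. The extra remarks on defining $(e,c)$ pointwise on all of $[0,1]$ and on measurability are harmless refinements of the same argument.
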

\begin{proof}
For an arbitrary set technology $\Acts \supseteq \Acts_0$, let $a(\alpha) = (F(\alpha), c(\alpha))$ be the action that maximizes her own utility when the linear contract uses a slope of $\alpha$, i.e.,
\[
a(\alpha) = (F(\alpha),c(\alpha)) \eqdef \argmax_{(F,c) \in \Acts} \left( \alpha \cdot \Exlong[F]{y} - c \right)~.
\]
Here, we break tie in the worst case for the principal.
We abuse the notation $u(\alpha)$ to denote the utility of the agent with respect to the linear contract $w_{\alpha}$, i.e.,
\[
u(\alpha) \eqdef \max_{(F,c) \in \Acts} \left( \alpha \cdot \Exlong[F]{y} - c \right) = \alpha \cdot \Exlong[F(\alpha)]{y} - c(\alpha)~.
\]
We use $e(\alpha) \eqdef \Ex[F(\alpha)]{y}$ to denote the expected outcome of $F(\alpha)$. Then, $u(\alpha) = \alpha \cdot e(\alpha) - c(\alpha)$ and the corresponding payoff of the principal is $(1-\alpha) \cdot e(\alpha)$.
The first family of constraints follows from the fact that action $a(\alpha)$ provides a larger utility than $a(\alpha')$ when the contract is realized to be $w_\alpha$. That is,
\[
u(w_{\alpha}, a(\alpha)) \ge u(w_{\alpha}, a(\alpha')) \iff \alpha \cdot e(\alpha) - c(\alpha) \ge \alpha \cdot e(\alpha') - c(\alpha')~.
\]
Furthermore, $a(\alpha)$ also guarantees a no smaller utility than every action $(F_0,c_0) \in \Acts_0$. This gives the second family of the constraints:
\[
u(w_{\alpha}, a(\alpha)) \ge \max_{(F_0,c_0) \in \Acts_0} u(w_{\alpha}, (F_0,c_0)) \iff \alpha \cdot e(\alpha) - c(\alpha) \ge \underline{u}(\alpha)~.
\]
This concludes the proof of the statement.
\end{proof}

Readers who are familiar with mechanism designs might have noticed that the first family of constraints of \eqref{eqn:lp1} are essentially the incentive compatibility constraints in a single-dimensional auction setting. 
Our functions $e(\cdot), c(\cdot)$ correspond to the allocation rule and payment rule respectively.
The celebrated Myerson's lemma~\cite{mor/Myerson81} states that the function $e(\alpha)$ must be non-decreasing in $\alpha$ and the function $c(\alpha)$ is uniquely determined by $e(\alpha)$. This allows us to simplify \eqref{eqn:lp1}.
For completeness, we provide a self-contained proof in our context, which is a verbatim proof from Myerson's Lemma.

\begin{claim}
\label{clm:lp2}
Program \eqref{eqn:lp1} is equivalent to the following program:
\begin{align}
\label{eqn:lp2}
\min_{e}: \quad & \int_0^1 (1-\alpha) \cdot e(\alpha) \dd G(\alpha) \tag{P2} \\
{\normalfont \text{subject to}}: \quad & e(\alpha) \ge e(\alpha') \ge 0 & \forall 0 \le \alpha' < \alpha \le 1 \notag \\
& \int_0^{\alpha} e(t) \dd t \ge \underline{u}(\alpha) & \forall \alpha \in [0,1] \notag 
\end{align}
\end{claim}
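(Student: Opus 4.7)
The plan is to apply the standard Myerson-style reparametrization from single-parameter mechanism design, viewing $e(\alpha)$ as an allocation rule and $c(\alpha)$ as a payment rule for a buyer whose ``type'' is the contract slope $\alpha$. Since the two programs share the same objective, which depends only on $e$, it suffices to show that feasibility of $(e,c)$ in \eqref{eqn:lp1} implies feasibility of $e$ in \eqref{eqn:lp2}, and conversely that any feasible $e$ of \eqref{eqn:lp2} extends via some $c$ to a feasible pair of \eqref{eqn:lp1}.

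For the first direction, take any feasible $(e,c)$ for \eqref{eqn:lp1}. The swap trick, writing the IC inequality for $(\alpha,\alpha')$ and for $(\alpha',\alpha)$ and adding, yields $(\alpha-\alpha')(e(\alpha)-e(\alpha')) \ge 0$, so $e$ is non-decreasing. The same two inequalities sandwich $c(\alpha)-c(\alpha')$ between $\alpha'(e(\alpha)-e(\alpha'))$ and $\alpha(e(\alpha)-e(\alpha'))$. Telescoping over a partition of $[0,\alpha]$ and refining gives $c(\alpha)-c(0)=\int_0^\alpha t\,de(t)=\alpha e(\alpha)-\int_0^\alpha e(t)\,dt$ via Riemann--Stieltjes integration by parts, which is valid since $e$ is of bounded variation. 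Consequently the agent's utility equals $u(\alpha)=\int_0^\alpha e(t)\,dt-c(0)$. Evaluating IR at $\alpha=0$ with $\underline{u}(0)=0$ forces $c(0)\le 0$, while non-negativity forces $c(0)\ge 0$; hence $c(0)=0$ and the IR constraint reduces to exactly $\int_0^\alpha e(t)\,dt\ge\underline{u}(\alpha)$, the second constraint of \eqref{eqn:lp2}. Thus $e$ is feasible for \eqref{eqn:lp2} with the same objective value.

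For the reverse direction, given feasible $e$ for \eqref{eqn:lp2}, define the canonical payment $c(\alpha) := \alpha e(\alpha) - \int_0^\alpha e(t)\,dt$. Non-negativity $c(\alpha)\ge 0$ is immediate from $e(t)\le e(\alpha)$ on $[0,\alpha]$. The IC inequality of \eqref{eqn:lp1} rearranges to $\int_{\alpha'}^\alpha (e(t)-e(\alpha'))\,dt\ge 0$, which holds for both $\alpha\ge\alpha'$ and $\alpha<\alpha'$ by monotonicity of $e$ (the integrand shares sign with $\alpha-\alpha'$ on the interval of integration, making the integral non-negative regardless of orientation). The IR holds since $\alpha e(\alpha)-c(\alpha)=\int_0^\alpha e(t)\,dt\ge\underline{u}(\alpha)$ by the second constraint of \eqref{eqn:lp2}. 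So $(e,c)$ is feasible for \eqref{eqn:lp1} with the same objective value.

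The main subtlety is handling the boundary at $\alpha=0$ and justifying the Riemann--Stieltjes manipulation when $e$ is only assumed non-decreasing (and hence possibly discontinuous). Both are routine: monotonicity provides bounded variation, and the IR at $\alpha=0$ combined with non-negativity pins down $c(0)=0$. Overall the argument is essentially a transcription of Myerson's Lemma adapted to the contract-design formulation, and no serious obstacle arises.
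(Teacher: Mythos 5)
Your proof is correct and follows essentially the same Myerson-style route as the paper: derive monotonicity of $e$ and the payment identity $c(\alpha)-c(0)=\alpha e(\alpha)-\int_0^\alpha e(t)\dd{t}$ from the IC constraints, pin down $c(0)=0$, and extend any feasible $e$ of \eqref{eqn:lp2} back to \eqref{eqn:lp1} via the canonical payment. The only differences are matters of rigor rather than substance: you treat general non-decreasing (possibly non-differentiable) $e$ via Riemann--Stieltjes sums and check both directions of the equivalence explicitly, where the paper assumes differentiability in a footnote and argues the converse through the ``dominating choice'' $c(0)=0$.
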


\begin{proof}
By the first family of constraints in \eqref{eqn:lp1}, for every $\alpha > \alpha'$, we obtain
\[
\alpha \cdot (e(\alpha)-e(\alpha')) \ge c(\alpha)-c(\alpha') \ge \alpha' \cdot (e(\alpha)-e(\alpha')),
\]
which implies that $e(\cdot)$ is non-decreasing. Furthermore, for any non-decreasing $e(\cdot)$, we have that
\[
\alpha \cdot \frac{e(\alpha)-e(\alpha')}{\alpha- \alpha'}  \ge \frac{c(\alpha)-c(\alpha')}{\alpha-\alpha'} \ge \alpha' \cdot \frac{e(\alpha)-e(\alpha')}{\alpha- \alpha'} 
\]
Thus, $\frac{\dd c(\alpha)}{\dd \alpha} = \alpha \cdot \frac{\dd e(\alpha)}{\dd \alpha}$ by letting $\alpha' \to \alpha-$0\footnote{For simplicity, we assume $e(\cdot)$ to be differentiable but the statement generalizes to an arbitrary non-decreasing function $e(\cdot)$. See~\cite{mor/Myerson81} for a more formal treatment.}. Integrating $\alpha$ from $0$, we have that 
\[
c(\alpha) - c(0) = \alpha \cdot e(\alpha) - \int_0^{\alpha} e(t) \dd t~.
\]
Finally, by the non-negativity of $c(\cdot)$, a dominating choice of $c(0)$ is $0$ so that the IR constraints of \eqref{eqn:lp1} are satisfied. This concludes the proof of the claim.
\end{proof}

Finally, by Claim~\ref{clm:lp1} and \ref{clm:lp2}, the expected payoff of the randomized linear contract with cumulative distribution function $G^*$ is lower bounded by
\begin{align*}
\min_{e}: \quad & \int_0^1 (1-\alpha) \cdot e(\alpha) \dd G^*(\alpha) \\
{\normalfont \text{subject to}}: \quad & e(\alpha) \ge e(\alpha') \ge 0 & \forall 0 \le \alpha' < \alpha \le 1 \\
& \int_0^{\alpha} e(t) \dd t \ge \underline{u}(\alpha) & \forall \alpha \in [0,1]
\end{align*}
According to the definition of $G^*$, we have
\begin{align*}
\int_0^1 (1-\alpha) \cdot e(\alpha) \dd G^*(\alpha) & = \int_0^{\alpha^*} (1-\alpha) \cdot e(\alpha) \dd \left( \frac{\ln(1-\alpha)}{\ln(1-\alpha^*)} \right) \\
& = \int_0^{\alpha^*}  \frac{e(\alpha)}{-\ln{(1-\alpha^*)}} \dd \alpha = \frac{\int_0^{\alpha^*}  e(\alpha) \dd \alpha}{-\ln{(1-\alpha^*)}} \ge \frac{\underline{u}(\alpha^*)}{-\ln{(1-\alpha^*)}}~,
\end{align*}
which concludes the proof of the lemma.

\begin{remark*}
Our distribution $G^*$ is indeed the optimal solution of the robust optimization:
\[
\max_{G} \min_{e} \int_0^1 (1-\alpha) \cdot e(\alpha) \dd G(\alpha),
\]
where the feasible spaces of $G$ and $e$ are both convex ($G$ is a cumulative distribution function and $\vect{e}$ is a function that satisfies the (linear) constraints as stated in \eqref{eqn:lp2}). By the minimax theorem, we have that
\[
\max_{G} \min_{e} \int_0^1 (1-\alpha) \cdot e(\alpha) \dd G(\alpha) = \min_{e} \max_{G} \int_0^1 (1-\alpha) \cdot e(\alpha) \dd G(\alpha)~.
\]
We shall utilize the optimal solution $e^*$ of the min-max optimization to prove Lemma~\ref{lem:upper}.
\end{remark*}

\subsection{Proof of Lemma~\ref{lem:upper}}
Let $e^*$ be the following function:
\[
e^*(\alpha) \eqdef \frac{\underline{u}(\alpha^*)}{- (1-\alpha) \cdot \ln{(1-\alpha^*)}}, \quad \forall \alpha \in [0,1]~,
\]
and $c^*$ be the corresponding cost function:
\[
c^*(\alpha) \eqdef \alpha \cdot e^*(\alpha) - \int_0^{\alpha} e^*(t) \dd t, \quad \forall \alpha \in [0,1]~.
\]
Let $\bar{\alpha}$ be the value that $e^*(\bar{\alpha}) = \max_{(F_0,c_0) \in \Acts_0} \Ex[F_0]{y}$. Note that such an $\bar{\alpha}$ exists since $\lim_{\alpha \to 1} e^*(\alpha) = +\infty$.

Then for every $\alpha \in [0,\bar{\alpha}]$, define an action set:
\begin{align*}
& \Acts(0) \eqdef \left\{ (F,c) \middle| \Exlong[F]{y} \le e^*(0), c \ge 0 \right\} \\
& \Acts(\alpha) \eqdef \left\{ (F,c) \middle| \Exlong[F]{y}=e^*(\alpha),c \ge c^*(\alpha) \right\}, \quad \forall \alpha \in (0,\bar{\alpha}]
\end{align*}
Consider the technology $\Acts \eqdef \bigcup_{\alpha \in [0,\bar{\alpha}]} \Acts(\alpha)$. 
\begin{claim}
$\Acts \supseteq \Acts_0$.
\end{claim}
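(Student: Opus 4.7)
The plan is to take an arbitrary $(F_0,c_0)\in \Acts_0$ with mean $v_0 \eqdef \Ex[F_0]{y}$ and exhibit some $\alpha_0 \in [0,\bar\alpha]$ such that $(F_0,c_0)\in \Acts(\alpha_0)$. First I would record elementary properties of $e^*$: the expression $e^*(\alpha)=\frac{\underline u(\alpha^*)}{-(1-\alpha)\ln(1-\alpha^*)}$ is continuous and strictly increasing on $[0,1)$ (since $1-\alpha$ is decreasing and $-\ln(1-\alpha^*)>0$ in the non-degenerate regime $\alpha^*>0$), and by construction $e^*(\bar\alpha) = \max_{(F,c)\in \Acts_0}\Ex[F]{y} \ge v_0$. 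So either $v_0\le e^*(0)$, in which case $(F_0,c_0)\in \Acts(0)$ because $c_0\ge 0$, or else by the intermediate value theorem there is a unique $\alpha_0\in (0,\bar\alpha]$ with $e^*(\alpha_0)=v_0$.

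In the second case, what remains is the inequality $c_0 \ge c^*(\alpha_0)$. The key computation is the antiderivative
\[
\int_0^{\alpha_0} e^*(t)\dd t \;=\; \frac{\underline u(\alpha^*)}{-\ln(1-\alpha^*)} \int_0^{\alpha_0} \frac{\dd t}{1-t} \;=\; \frac{-\ln(1-\alpha_0)}{-\ln(1-\alpha^*)}\cdot \underline u(\alpha^*).
\]
By the defining optimality of $\alpha^*$, namely $\frac{\underline u(\alpha)}{-\ln(1-\alpha)} \le \frac{\underline u(\alpha^*)}{-\ln(1-\alpha^*)}$ for every $\alpha \in [0,1]$, this integral is at least $\underline u(\alpha_0)$. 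Consequently,
\[
c^*(\alpha_0) \;=\; \alpha_0\cdot e^*(\alpha_0) - \int_0^{\alpha_0} e^*(t)\dd t \;\le\; \alpha_0 \cdot v_0 - \underline u(\alpha_0).
\]
On the other hand, the definition of $\underline u$ gives $\underline u(\alpha_0) \ge \alpha_0 \cdot v_0 - c_0$, hence $c_0 \ge \alpha_0 \cdot v_0 - \underline u(\alpha_0) \ge c^*(\alpha_0)$, placing $(F_0,c_0)$ in $\Acts(\alpha_0)$.

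The main obstacle is really just making sure the two ingredients line up correctly: the closed-form integral of $e^*$ and the definition of $\alpha^*$ conspire precisely so that $\int_0^{\alpha_0} e^*(t)\dd t \ge \underline u(\alpha_0)$, which is exactly the slack needed between $\alpha_0 v_0 - c_0$ and $c^*(\alpha_0)$. The only other subtlety is the degenerate case $\alpha^*=0$, where $e^*$ is not well-defined by this formula; but Lemma~\ref{lem:degenerate} already covers that regime, so the construction here is only needed for $\alpha^*>0$.
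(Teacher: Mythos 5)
Your proof is correct and follows essentially the same route as the paper: split on whether $\Ex[F_0]{y}\le e^*(0)$, otherwise pick $\alpha_0$ with $e^*(\alpha_0)=\Ex[F_0]{y}$, compute $\int_0^{\alpha_0}e^*(t)\dd t=\frac{-\ln(1-\alpha_0)}{-\ln(1-\alpha^*)}\underline u(\alpha^*)\ge\underline u(\alpha_0)$ via the optimality of $\alpha^*$, and combine with $\underline u(\alpha_0)\ge\alpha_0\Ex[F_0]{y}-c_0$ to get $c_0\ge c^*(\alpha_0)$. Your added remarks (monotonicity of $e^*$ plus the intermediate value theorem for the existence of $\alpha_0$, and the aside on the degenerate case $\alpha^*=0$) only make explicit what the paper leaves implicit.
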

\begin{proof}
For an arbitrary $ (F_0,c_0) \in \Acts_0$, if $\Exlong[F_0]{y} \le e^*(0)$, we have that $(F_0, c_0) \in \Acts(0)$.
Otherwise, let $\Exlong[F_0]{y}=e^*(\alpha_0)$ for some $\alpha_0 \in (0,1)$. We then have
\begin{align*}
\alpha_0 \cdot e^*(\alpha_0)-c^*(\alpha_0) & = \int_0^{\alpha_0} e^*(t) \dd t = \int_0^{\alpha_0} \frac{\underline{u}(\alpha^*)}{- (1-t) \cdot \ln{(1-\alpha^*)}} \dd t = \frac{- \ln(1-\alpha_0) \cdot \underline{u}(\alpha^*)}{- \ln(1-\alpha^*)} \\
& \ge \underline{u}(\alpha_0) = \max_{(F,c)\in \Acts_0} \left( \alpha_0 \cdot \Ex[F]{y} - c \right) \ge \alpha_0 \cdot \Exlong[F_0]{y}-c_0 = \alpha_0 \cdot e^*(\alpha_0) - c_0~,
\end{align*}
where the first inequality follows from the definition of $\alpha^*$. Consequently, $c_0 \ge c^*(\alpha_0)$ and $(F_0, c_0) \in \Acts(\alpha_0)$. 
\end{proof}

Finally, we show that any contract $w$ achieves an expected payoff of at most $\frac{\underline{u}(\alpha^*)}{- \ln (1-\alpha^*)}$ with respect to $\Acts$.
Let $a_w=(F_w,c_w) \in \Acts$ be the action chosen by the agent. Here, we break tie in the best case for the principal. Note that $a_w$ maximizes the agent's utility, i.e.,
\[
\Exlong[F_w]{w(y)} - c_w = \max_{(F,c) \in \Acts} \left( \Exlong[F]{w(y)} - c \right)~.
\]
Note that the maximum utility is achievable since $\Acts$ is constructed as a closed set\footnote{This is the reason that we introduce $\bar{\alpha}$.}.
Suppose $a_w \in \Acts(\alpha_w)$. Then $\Ex[F_w]{y} \le e^*(\alpha_w)$ and $c_w = c^*(\alpha_w)$. 
Consider the following alternatives of possible actions
\[
\left\{ \left( t\cdot F_w+(1-t)\cdot \delta_0,c^*\left((e^*)^{-1} \left(t\cdot e^*(\alpha_w) \right) \right) \right) \right\}_{t \in [0,1]}~.
\]
That is, the agent can potentially play the mixture of $F_w$ and $\delta_0$. For each $t \in [0,1]$, her corresponding utility is then
\[
u_w(t) \eqdef t \cdot \Exlong[F_w]{w(y)} - c^*((e^*)^{-1}(t\cdot e^*(\alpha_w)))~.
\]
Since $t=1$ corresponds to the agent's best action, it must be the case that $u_w'(1) \ge 0$. That is,
\begin{align*}
0 \le u_w'(1) & = \Exlong[F_w]{w(y)} - \left. (c^*)'\left((e^*)^{-1}(t \cdot e^*(\alpha_w))\right) \cdot \left((e^*)^{-1}\right)'(t \cdot e^*(\alpha_w)) \cdot e^*(\alpha_w) \right|_{t=1} \\
& = \Exlong[F_w]{w(y)} - (c^*)'(\alpha_w) \cdot \frac{1}{\left(e^*\right)'(\alpha_w)} \cdot e^*(\alpha_w) \\
& = \Exlong[F_w]{w(y)} - \alpha_w \cdot e^*(\alpha_w)~.
\end{align*}
Here, the first equality follows from the chain rule; the second equality follows from the inverse function rule in calculus; the last equality follows from the definition of $c^*$.
Consequently, the expected payoff of the principal is at most
\[
\Exlong[F_w]{y - w(y)} \le (1-\alpha_w) \cdot e^*(\alpha_w) = \frac{\underline{u}(\alpha^*)}{-\ln(1-\alpha^*)}~.
\]
This upper bound applies to an arbitrary contract $w$ and hence, concludes the proof of the lemma.

\begin{remark*}
We remark that the technology $\Acts$ need not include all possible outcome distributions $F$ as constructed above. The crucial part our proof is to have that the agent can play $t \cdot F + (1-t) \cdot \delta_0$ as long as $F$ is in the technology, so that the step of $u_w'(1) \ge 0$ can go through. To this end, we can as well only consider outcome distributions $\{ \left. t \cdot F_0 + (1-t) \cdot \delta_0 \right| (F_0,c) \in \Acts_0, t \in [0,1] \}$. A similar observation is pointed out by Carroll~\cite{carroll2015robustness} (in Appendix C of his paper) for deterministic contracts.
\end{remark*}

\subsection{Comparison between Optimal Randomized and Deterministic Contracts}
We examine the advantage of randomized contracts over deterministic contracts. We focus on the case when $\Acts_0$ involves only one non-trivial action $(F_0,c_0)$ with $\Ex[F_0]{y} > c_0$.
For notation simplicity, we further normalize $\Ex[F_0]{y}$ to be $1$ and assume $c_0 < 1$.
\begin{theorem*}[Carroll~\cite{carroll2015robustness}]
The optimal deterministic contract achieves a payoff of 
\[
\left( \sqrt{\Ex[F_0]{y}} - \sqrt{c_0}\right)^2 = \left( 1 - \sqrt{c_0} \right)^2~.
\]
\end{theorem*}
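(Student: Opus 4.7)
The plan is to apply Carroll's theorem that was already recalled in the proof of Lemma~\ref{lem:degenerate}, specialized to the singleton-non-trivial-action setting under consideration. That theorem states that the optimal deterministic payoff (under best tie-breaking) equals
\[
\max_{(F_0,c_0)\in \Acts_0} \left( \sqrt{\Ex[F_0]{y}} - \sqrt{c_0}\right)^2,
\]
so the entire task reduces to evaluating this maximum when $\Acts_0$ consists of just the null action $(\delta_0, 0)$ (present by the non-triviality assumption) and the single non-trivial action $(F_0, c_0)$, normalized so that $\Ex[F_0]{y} = 1$ and $c_0 < 1$.

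First I would plug in the null action, which yields $(\sqrt{0}-\sqrt{0})^2 = 0$. Next I would plug in the non-trivial action, which yields $(\sqrt{1}-\sqrt{c_0})^2 = (1-\sqrt{c_0})^2$, and this is strictly positive because $c_0 < 1$. Hence the maximum is attained at the non-trivial action and equals $(1-\sqrt{c_0})^2$, matching the equality claimed in the statement. If desired, I would also remark that the optimal contract is the linear one with slope $\sqrt{c_0/\Ex[F_0]{y}} = \sqrt{c_0}$, by the second part of Carroll's theorem.

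There is essentially no obstacle here: the statement is a direct specialization of an earlier theorem, and the verification is a one-line arithmetic computation. The only conceptual point worth stressing is why only the two listed actions need to enter the maximum; this is exactly the content of the simplifying assumption that $\Acts_0$ contain only one non-trivial action on top of the mandatory null action. The role of this restatement is to prepare the explicit numerator against which the randomized payoff $\underline{u}(\alpha^*)/(-\ln(1-\alpha^*))$ from Theorem~\ref{thm:main} will be compared later in the section.
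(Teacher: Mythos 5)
Your proposal is correct and matches how the paper itself treats this statement: it is simply the general Carroll theorem (already quoted in the proof of Lemma~\ref{lem:degenerate}, and reproved in Appendix~\ref{app:deterministic}) specialized to $\Acts_0=\{(\delta_0,0),(F_0,c_0)\}$ with the normalization $\Ex[F_0]{y}=1$, $c_0<1$, so the maximum is attained at the non-trivial action and equals $(1-\sqrt{c_0})^2$. The arithmetic specialization you give is exactly the intended (and only needed) argument.
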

Theorem~\ref{thm:main} states that the optimal randomized payoff is 
\[
\max_{\alpha} \frac{\alpha \cdot \Ex[F_0]{y} - c_0}{- \ln (1-\alpha)} = \max_{\alpha} \frac{\alpha - c_0}{- \ln (1-\alpha)}
\]
By taking derivative over $\alpha$, we know that maximum is achieved at $\alpha^*$, the unique solution to 
\[
\alpha^* + \ln(1-\alpha^*) \cdot (1-\alpha^*) = c_0~.
\]
The corresponding payoff is
\[
\frac{\alpha^* - c_0}{- \ln (1-\alpha^*)} = 1-\alpha^*.
\]
Therefore, the ratio between the optimal randomized payoff and  deterministic payoff is
\[
\frac{1-\alpha^*}{\left( 1 - \sqrt{c_0} \right)^2}~.
\]
Notice that the ratio approaches infinity when $c_0 \to 1$. 
\begin{claim}
Let $\alpha(y)$ be the inverse function of $y=x + \ln(1-x) \cdot (1-x)$, where  $x \in [0,1]$, then 
\[
\lim_{y\to 1^-} \frac{1-\alpha(y)}{\left( 1 - \sqrt{y} \right)^2} = +\infty~.
\]
\end{claim}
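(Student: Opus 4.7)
The plan is to substitute $u = 1 - x = 1 - \alpha(y)$ and compute asymptotics as $u \to 0^+$. First I would verify that $y \to 1^-$ corresponds exactly to $\alpha(y) \to 1^-$: the function $y(x) = x + (1-x)\ln(1-x)$ is continuous on $[0,1)$ with $y(0)=0$ and $\lim_{x\to 1^-} y(x) = 1$ (using the standard fact that $(1-x)\ln(1-x) \to 0$), and it is strictly increasing, so the inverse $\alpha$ is well-defined near $1$ and satisfies $\alpha(y) \to 1^-$.

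Next, with the substitution $u = 1 - \alpha(y)$, a direct rearrangement gives
\[
1 - y = 1 - \alpha(y) - (1-\alpha(y))\ln(1-\alpha(y)) = u - u \ln u = u(1 - \ln u).
\]
Combined with the identity $1 - \sqrt{y} = (1-y)/(1+\sqrt{y})$, this rewrites the target ratio as
\[
\frac{1-\alpha(y)}{(1-\sqrt{y})^2} = \frac{u(1+\sqrt{y})^2}{(1-y)^2} = \frac{(1+\sqrt{y})^2}{u(1-\ln u)^2}.
\]

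Finally I would take the limit $y \to 1^-$, equivalently $u \to 0^+$. In the numerator, $(1+\sqrt{y})^2 \to 4$. In the denominator, $u(1-\ln u)^2 \to 0$, because $-\ln u$ grows only logarithmically while $u$ decays to $0$ (formally, set $v = -\ln u \to +\infty$, then $u(1-\ln u)^2 = e^{-v}(1+v)^2 \to 0$). Hence the ratio tends to $+\infty$, which is the claim.

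There is essentially no obstacle: the only place a reader might pause is verifying that the inverse function is well-defined near $y=1$, which follows from monotonicity of $y(x)$ (its derivative $-\ln(1-x) > 0$ on $(0,1)$). The rest is a routine asymptotic calculation once the substitution $u = 1-\alpha(y)$ is in place.
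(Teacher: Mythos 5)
Your proposal is correct and follows essentially the same route as the paper: transfer the limit from $y$ to the variable $x=\alpha(y)$ (equivalently $u=1-\alpha(y)$) using monotonicity of $y(x)=x+(1-x)\ln(1-x)$, and then evaluate the resulting one-variable limit. The only difference is that you spell out the asymptotics the paper leaves as an easy verification, via $1-y=u(1-\ln u)$ and $1-\sqrt{y}=(1-y)/(1+\sqrt{y})$, which is a welcome but not substantively different elaboration.
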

\begin{proof}
It can be easily verified that $y$ is an increasing function of $x$ and when $x \to 1^{-}$, $y \to 1^{-}$. Therefore we have
\[
\lim_{y\to 1^-} \frac{1-\alpha(y)}{\left( 1 - \sqrt{y} \right)^2} = \lim_{x\to 1^-} \frac{1-x}{\left( 1 - \sqrt{x + \ln(1-x) \cdot (1-x)} \right)^2}=+\infty.
\]
\end{proof}
We conclude that the advantage of randomized contracts over deterministic contracts can be arbitrarily large.
We also plot the ratio as a function of $c_0$ (refer to Figure~\ref{fig:ratio}).
\begin{figure}[H]
    \centering
    \includegraphics[width=.7\textwidth]{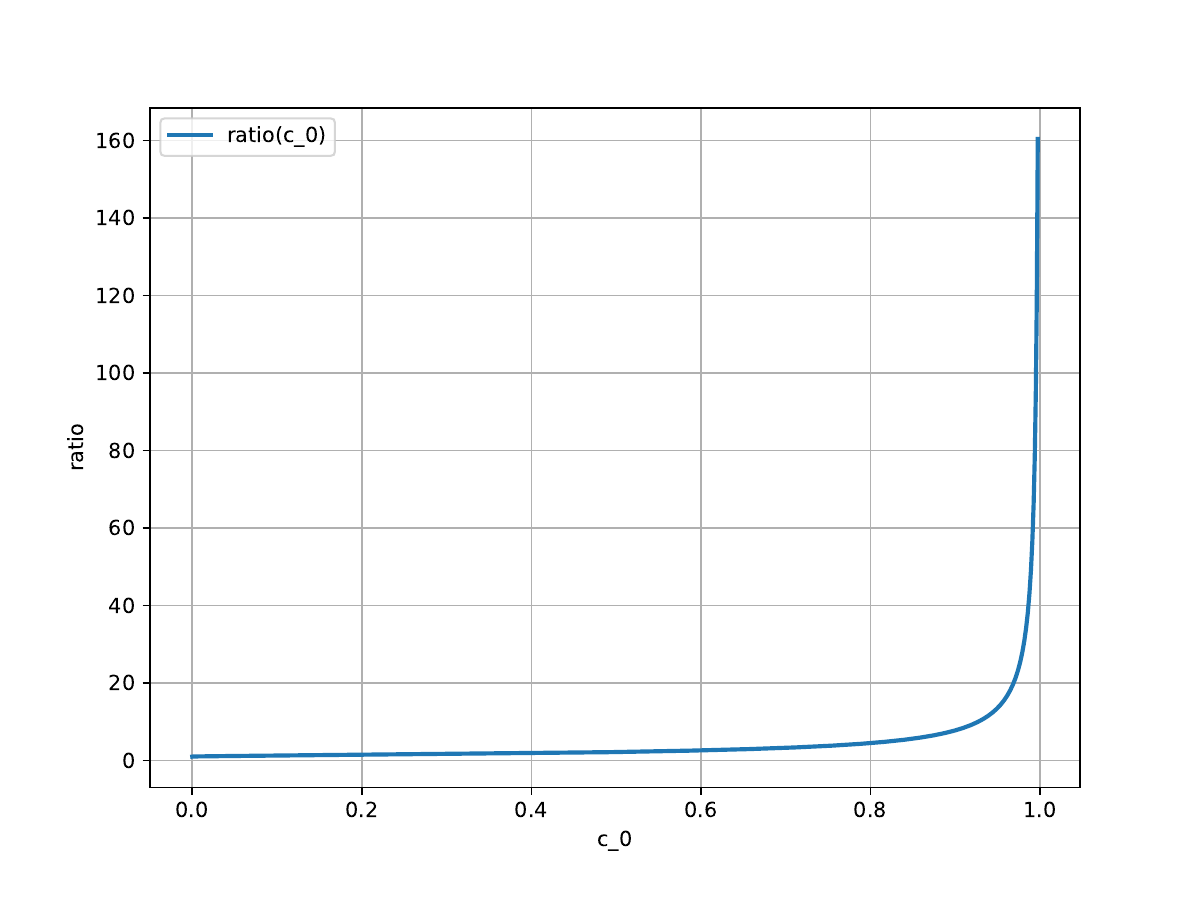}
    \caption{Ratio between the optimal randomized payoff and the optimal deterministic payoff}
    \label{fig:ratio}
\end{figure}

\section{Optimal Randomized Linear Contracts for Teams}
\label{sec:team}
In this section, we generalize our result to the case of contracting with a team comprising multiple agents. The following model is proposed by Dai and Toikka~\cite{dai2022robust}. They generalized the result of Carroll~\cite{carroll2015robustness} and proved that the optimal deterministic contract is linear.

We use the index set $I=\{1,...,n\}$ to denote the agents. A team's production technology is characterized by a tuple $(\Acts,\vect{c},F)$, where $\Acts= \times_{i=1}^n \Acts_i$ denotes the finite action space, $\vect{c}:\Acts \to \mathbb{R}_{+}^n$ represents the cost function of agents, and $F:\Acts \to \Delta
(\mathcal{Y})$ constitutes the family of output distributions. We restrict ourselves to the technologies where the cost incurred by each agent depends solely on their individual action, i.e., $c_i(\vect{a})=c_i(a_i)$ for any unobservable action $\vect{a}=(a_1,...,a_n)$. In addition, we assume that both the observable output set $\mathcal{Y} \subseteq \mathbb{R}_{+}$ and the action space $\Acts$ include finite elements. Similar to the single-agent case, the principal also has the flexibility to employ monetary rewards as incentives to motivate a team based on the outcome. The agents benefit from limited liability protection, requiring payments to be nonnegative.

A contract is defined as a vector-valued function $\vect{w}: \mathcal{Y} \to \mathbb{R}_{+}^n$, articulating the payment rule $\vect{w}(y) = (w_1(y),...,w_n(y))$ for each output $y$. Moreover, a contract $\vect{w}$ is termed linear if there exists some vector $\vect{\alpha}=(\alpha_1,...,\alpha_n) \in [0,1]^n$ such that $w_i(y) = \alpha_i \cdot y$ for all $i \in I$ and $y \in \mathcal{Y}$. For any contract $\vect{w}$, agent $i$'s utility is expressed as $w_i(y) -c_i(a_i)$ and the principal's payoff is given by $y-\sum_{i=1}^n w_i(y)$. The goal of the principal is to design a contract to maximize her payoff. However, the principal is only informed about some technology $(\Acts_0,\vect{c_0},F_0)$ which is part of the true technology $(\Acts,\vect{c},F)$ such that $\Acts
_0 \subseteq \Acts$ and $(\vect{c},F)\arrowvert_{\Acts_0} = (\vect{c_0},F_0)$. That is, the underlying technology contains the known action profile, and the corresponding costs and outcome distributions are consistent. We also assume that $(\Acts_0,\vect{c_0},F_0)$ includes a zero-cost action for the agents. 

Each contract-technology pair $(\vect{w},(\Acts,\vect{c},F))$ induces a normal form game, and a mixed strategy Nash equilibrium always exists due to the finite set $\Acts$. We focus on the equilibrium that maximizes the principal’s payoff. 

From now on,  we fix an arbitrary finite technology set $(\Acts_0,\vect{c}_0,F_0)$. Similar to the case of the single agent case, we define the following auxiliary function
\[
\underline{u}(\vect{\alpha}) \eqdef \max_{\vect{a} \in \Acts_0} \left(\Exlong[F(\vect{a})]{y}-\sum_{i=1}^n \frac{c_i(a_i)}{\alpha_i}\right)~.
\]

\paragraph{Productive known technology.} We assume that there exists $\vect{\alpha}$ with $\sum_{i} \alpha_i \le 1$ and $\underline{u}(\vect{\alpha}) > 0$. This is the same assumption that Dai and Toikka need to guarantee a positive payoff by deterministic contracts.

Let $\vect{\alpha}^*$ denote the critical slope that
\[
\vect{\alpha}^* \in \argmax_{\{\vect{\alpha}\in \mathbb{R}_{+}^n|\sum_{i=1}^n \alpha_i \le 1\}} \left(\frac{\underline{u}(\vect{\alpha})\cdot \sum_{i=1}^n \alpha_i}{-\ln(1-\sum_{i=1}^n \alpha_i)}\right)
\]

We then consider the following cumulative distribution:
\[
\tilde{G}^*(\beta) \eqdef \frac{\ln(1-\beta \cdot \sum_{i=1}^n \alpha_i^*)}{\ln(1-\sum_{i=1}^n \alpha_i^*)}, \quad \forall \beta \in [0,1]~.
\]

\begin{theorem}\label{team}
     The randomized linear contract supporting on the line segment $l(\beta) = \beta \cdot \vect{\alpha}^*$ for $\beta \in [0,1]$, where $\beta$ is drawn according to the cumulative distribution function $\tilde{G}^*(\beta)$,  achieves an expected payoff of $\frac{\underline{u}(\vect{\alpha}^*)\cdot \sum_{i=1}^n \alpha_i^*}{-\ln(1-\sum_{i=1}^n \alpha_i^*)}$.
\end{theorem}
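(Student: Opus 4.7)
The plan is to mirror the single-agent argument of Lemma 4.2 along the line $l(\beta)=\beta\vect{\alpha}^*$, reducing the team problem to a one-dimensional Myerson-style computation. Let $\alpha^{\text{tot}}:=\sum_{i=1}^n\alpha_i^*$, and for each realized $\beta\in[0,1]$ and any worst-case technology $\Acts\supseteq\Acts_0$, let $\vect{a}(\beta)$ denote the principal-optimal Nash equilibrium, with expected output $e(\beta):=\Exlong[F(\vect{a}(\beta))]{y}$ and individual costs $c_i(\beta):=c_i(a_i(\beta))$. A direct calculation using the explicit form of $\tilde{G}^*$ gives
\[
\int_0^1(1-\beta\alpha^{\text{tot}})\cdot e(\beta)\dd\tilde{G}^*(\beta)=\frac{\alpha^{\text{tot}}}{-\ln(1-\alpha^{\text{tot}})}\int_0^1 e(\beta)\dd\beta,
\]
so the theorem reduces to proving $\int_0^1 e(\beta)\dd\beta\ge\underline{u}(\vect{\alpha}^*)$ in the worst case.

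I would establish two team-setting analogs of Claims 4.3 and 4.4 along $l(\beta)$. First, a team Myerson identity, $\sum_{i=1}^n c_i(\beta)=\beta\alpha^{\text{tot}}\cdot e(\beta)-\alpha^{\text{tot}}\int_0^\beta e(t)\dd t$, derived from per-agent envelope conditions that combine additively---with weights $\alpha_i^*$---because all agents' slopes scale uniformly with $\beta$ on $l(\beta)$. Second, a team IR bound, $\beta\alpha^{\text{tot}}\cdot e(\beta)-\sum_{i=1}^n c_i(\beta)\ge\beta\alpha^{\text{tot}}\cdot\underline{u}(\beta\vect{\alpha}^*)$, obtained by summing the per-agent deviation inequalities against the maximizer of $\underline{u}(\beta\vect{\alpha}^*)$ within $\Acts_0$; the $\sum_i c_i(a_i)/\alpha_i$ normalization built into $\underline{u}$ is precisely what makes this aggregation clean. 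Chaining the identity and the inequality yields $\int_0^\beta e(t)\dd t\ge\beta\cdot\underline{u}(\beta\vect{\alpha}^*)$ for every $\beta\in[0,1]$, and evaluating at $\beta=1$ together with the definition of $\vect{\alpha}^*$ completes the argument.

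The main obstacle is bridging from the per-agent Nash conditions to these two scalar statements. In the single-agent setting the Myerson identity follows immediately from the clean IC comparison $\alpha e(\alpha)-c(\alpha)\ge\alpha e(\alpha')-c(\alpha')$, because the agent unilaterally chooses a single action. In the team setting the per-agent deviation at slope $\beta\vect{\alpha}^*$ compares against a partial-deviation output $\Exlong[F((a_i(\beta'),\vect{a}_{-i}(\beta)))]{y}$, which in general differs from $e(\vect{a}(\beta'))$, so one cannot invoke Myerson verbatim. Carefully exploiting the directional structure of $l(\beta)$---the fact that every agent's share is tied to the single scalar $\beta$---should allow the per-agent first-order conditions to be combined linearly with weights $\alpha_i^*$ into the claimed scalar identity and IR bound, which is conceptually the reason the analysis ``smoothly generalizes'' to teams. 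A fallback, should this direct envelope reasoning resist formalization, is to port the LP/duality framework of Section~\ref{sec:lp} to the team setting and read off both the Myerson identity and the IR bound from the dual characterization projected onto the line direction $\vect{\alpha}^*$.
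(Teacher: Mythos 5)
Your reduction in the first display is correct and matches the paper: after the change of measure, everything hinges on showing $\int_0^1 e(\beta)\,\mathrm{d}\beta \ge \underline{u}(\vect{\alpha}^*)$. The gap is in the two lemmas you propose to get there. First, your ``team Myerson identity'' with the unweighted cost sum is false in general: for the equilibrium selection that actually makes the envelope machinery work, the correct identity along the ray is $\sum_i c_i(\beta)/\alpha_i^* = \beta e(\beta)-\int_0^\beta e(t)\,\mathrm{d}t$, i.e.\ each agent's cost is weighted by $1/\alpha_i^*$; your version would require $\sum_i c_i=(\sum_j\alpha_j^*)\cdot\sum_i c_i/\alpha_i^*$, which fails whenever two agents bear positive cost (Cauchy--Schwarz gives only ``$\le$''). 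One can salvage the chaining with the reverse inequality $\sum_i c_i(\beta)\ge \beta(\sum_i\alpha_i^*)e(\beta)-(\sum_i\alpha_i^*)\int_0^\beta e$, but that, as well as your IR bound, still has to be proven, and your proposed route --- aggregating per-agent Nash first-order/deviation conditions with weights $\alpha_i^*$ --- is precisely the step that does not go through: a unilateral deviation by agent $i$ produces output $\Exlong[F(a_i',\vect{a}_{-i}(\beta))]{y}$, which cannot be related to $e(\beta')$ nor to $\Exlong[F(\vect{a}^0)]{y}$ for the $\underline{u}$-maximizer $\vect{a}^0$; the ``directional structure'' of $l(\beta)$ does not remove this coordination obstruction, and you offer no concrete mechanism that does. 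Moreover, for your selection (``the principal-optimal Nash equilibrium,'' possibly mixed) there is no monotonicity/envelope structure in $\beta$ at all, so Myerson-style reasoning cannot even be set up.

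The paper avoids Nash aggregation entirely. It fixes the specific pure profile $\vect{a}(\vect{\alpha})\in\argmax_{\vect{a}\in\Acts}\bigl(\Exlong[F(\vect{a})]{y}-\sum_i c_i(a_i)/\alpha_i\bigr)$, which Dai--Toikka's Lemma 4.2 (Claim~\ref{clm:team}) certifies to be a Nash equilibrium, so its payoff lower-bounds the best-equilibrium payoff; it then works with the team utility $u(\vect{\alpha})$ defined as that maximum, for which the envelope theorem gives $\partial u/\partial\alpha_i=c_i(\vect{\alpha})/\alpha_i^2$ and hence $e(\vect{\alpha})=u(\vect{\alpha})+\sum_i\alpha_i\,\partial u/\partial\alpha_i$. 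Along the ray the integrand becomes an exact derivative of $\beta\,u(\beta\vect{\alpha}^*)$, the integral telescopes to $u(\vect{\alpha}^*)$, and the only constraint invoked is $u\ge\underline{u}$, immediate from $\Acts_0\subseteq\Acts$ --- exactly the weighted identity and IR bound you need, obtained without touching per-agent deviations. To repair your argument, adopt this equilibrium selection and the team-utility envelope computation (your LP-duality fallback is not what the paper does and is left unformalized), and also handle the degenerate case $\vect{\alpha}^*=\vect{0}$, which the paper treats separately by matching Dai--Toikka's deterministic benchmark and which your proposal omits.
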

\begin{proof}
For each linear contract $\vect{w}(y)=(w_1(y),...,w_n(y))$, where $w_i(y)=\alpha_i \cdot y$ and a finite technology $(\Acts,\vect{c},F)$, let 
\[
\vect{a}(\vect{\alpha})= (a_1(\vect{\alpha}),...,a_n(\vect{\alpha})) \in \argmax_{\vect{a} \in \Acts} \left(\Exlong[F(\vect{a})]{y}-\sum_{i=1}^n \frac{c_i(a_i)}{\alpha_i}\right).
\]

\begin{claim}[Lemma 4.2 of~\cite{dai2022robust}]
\label{clm:team}
\vect{a}(\vect{\alpha}) is a (pure) Nash equilibrium of the agents with respect to $w_{\vect{\alpha}}$.
\end{claim}
We shall consider $\vect{a}(\vect{\alpha})$ as the actions of the agents for every linear contract.
Furthermore, let
\[
e(\vect{\alpha}) \eqdef \Exlong[F(\vect{a}(\vect{\alpha}))]{y} \quad \text{,} \quad \vect{c}(\vect{\alpha}) = (c_1(a_1(\vect{\alpha})),...,c_n(a_n(\vect{\alpha})))
\]
and
\[
u(\vect{\alpha}) \eqdef \max_{\vect{a} \in \Acts} \left( \Exlong[F(\vect{a})]{y}-\sum_{i=1}^n \frac{c_i(a_i)}{\alpha_i}\right).
\]

Now, we establish a convex program to lower bound the principal's expected payoff of any randomized linear contract.
\begin{claim}
For an arbitrary randomized linear contract with cumulative distribution function $G$ on the slopes $\vect{\alpha}$, the expected payoff of the principal is lower bounded by the following program.
\begin{align*}
\label{lp3}
\min_{u}: \quad & \int_{\{\vect{\alpha} \in \mathbb{R}_{+}^n|\sum_{i=1}^n \alpha_i \le 1\}} \left(1-\sum_{i=1}^n \alpha_i\right) \cdot \left(u(\vect{\alpha})+\sum_{i=1}^n \frac{\partial u(\vect{\alpha})}{\partial \alpha_i} \cdot \alpha_i\right) \, \dd G(\vect{\alpha})  \tag{P3}\\
{\normalfont \text{subject to}}: \quad & u(\vect{\alpha}) \ge \underline{u}(\vect{\alpha}), \hspace{4cm} \forall \vect{\alpha} \in  \left\{\vect{\alpha} \in \mathbb{R}_{+}^n \middle| \sum_{i=1}^n \alpha_i \le 1\right\}
\end{align*}
\end{claim}

\begin{proof}
Given any $\vect{\alpha}=(\alpha_1,...,\alpha_n)$, consider the linear contract $\vect{w}(y)=(w_1(y),...,w_n(y))$, where $w_i(y)=\alpha_i \cdot y$. 
By the definition of $e(\vect{\alpha})$, $\vect{c}(\vect{\alpha})$ and $u(\vect{\alpha})$, we have
\[
u(\vect{\alpha})=e(\vect{\alpha}) -\sum_{i=1}^n \frac{c_i(\vect{\alpha})}{\alpha_i}
\]
and thus
\[
\frac{\partial u(\vect{\alpha})}{\partial \alpha_i} = \frac{\partial e(\vect{\alpha})}{\partial \alpha_i} - \sum_{j=1}^n \frac{1}{\alpha_j} \cdot \frac{\partial c_j(\vect{\alpha})}{\partial \alpha_i} + \frac{c_i(\vect{\alpha})}{\alpha_i^2} = \frac{c_i(\vect{\alpha})}{\alpha_i^2},
\]
where the second equality holds because $e(\alpha_i, \vect{\alpha}_{\text{-}i}),c_j(\alpha_i, \vect{\alpha}_{\text{-}i})$ as an one-dimension function of $\alpha_i$ is piece-wise constant. Consequently, we have $e(\vect{\alpha})=u(\vect{\alpha})+\sum_{i=1}^n \frac{\partial u(\vect{\alpha})}{\partial \alpha_i} \cdot \alpha_i$. 

Therefore, the principal's payoff corresponding to $\vect{\alpha}$ is 
     $$\left(1-\sum_{i=1}^n \alpha_i\right) \cdot \left(u(\vect{\alpha})+\sum_{i=1}^n \frac{\partial u(\vect{\alpha})}{\partial \alpha_i} \cdot  \alpha_i\right).$$
     And the constraint holds trivially by $\Acts_0 \subseteq \Acts$. We conclude our proof.
\end{proof}

Now, we are left to solve program \eqref{lp3}.

\paragraph{Degenerated case: $\vect{\alpha}^*=\vect{0}$.} 
We first study the degenerated case when $\vect{\alpha}^*=\vect{0}$. Notice that our distribution $\tilde{G}$ degenerates to a point mass at $\beta=0$. It suffices to verify that the stated payoff 
\[
\frac{\underline{u}(\vect{\alpha}^*)\cdot \sum_{i=1}^n \alpha_i^*}{-\ln(1-\sum_{i=1}^n \alpha_i^*)}
\]
coincides with the following optimal deterministic payoff Dai and Toikka~\cite{dai2022robust}:
\[
\max_{\vect{a} \in \Acts_0} \left( \sqrt{\Exlong[F(\vect{a})]{y}} -\sum_{i=1}^n \sqrt{c_i(a_i)}\right)^2~.
\]
First, there must exist an action $\vect{a}^* \in \Acts_0$ satisfying $c_i(a_i^*)=0$ for every $i \in I$, and
\begin{equation}
\label{eqn:team1}
\Exlong[F(\vect{a}^*)]{y} \ge \frac{\underline{u}(\vect{\alpha})\cdot \sum_{i=1}^n \alpha_i}{-\ln(1-\sum_{i=1}^n \alpha_i)}, \forall \vect{\alpha} \in  \left\{\vect{\alpha} \in \mathbb{R}_{+}^n \middle| \sum_{i=1}^n \alpha_i \le 1\right\}
\end{equation}
Next, we prove that
\begin{equation}
\label{eqn:team_degenerate}
\Exlong[F(\vect{a}^*)]{y} \ge \left( \sqrt{\Exlong[F(\vect{a})]{y}} -\sum_{i=1}^n \sqrt{c_i(a_i)}\right)^2, \forall \vect{a} \in \Acts_0.
\end{equation}
Fix an arbitrary $\vect{a} \in \Acts_0$. If $\Ex[F(\vect{a}^{*})]{y} \ge \Ex[F(\vect{a})]{y}$, we have $ \sqrt{\Ex[F(\vect{a}^{*})]{y}} \ge \sqrt{\Ex[F(\vect{a})]{y}} \ge \sqrt{\Ex[F(\vect{a})]{y}}-\sum_{i=1}^n \sqrt{c_i(a_i)}$.
Else, by Cauchy-Schwarz Inequality, we have
\begin{equation}
\label{eqn:team2}
   \sum_{i=1}^n \frac{c_i(a_i)}{\alpha_i} \ge \frac{(\sum_{i=1}^n \sqrt{c_i(a_i)})^2}{\sum_{i=1}^n \alpha_i} 
\end{equation}
let $h(\vect{\alpha}) \eqdef \Ex[F(\vect{a}^*)]{y} \cdot (- \ln (1-\sum_{i=1}^n \alpha_i)) - \Ex[F(\vect{a})]{y} \cdot (\sum_{i=1}^n \alpha_i) + (\sum_{i=1}^n \sqrt{c_i(a_i)})^2$. By combining \eqref{eqn:team1} with \eqref{eqn:team2}, we have $h(\vect{\alpha}) \ge 0$ for every $\vect{\alpha} \in [0,1]$.
Note that $h$ attains its minimum when $\sum_{i=1}^n \alpha_i=1-\frac{\Ex[F(\vect{a}^{*})]{y}}{\Ex[F(\vect{a})]{y}}$. Thus, we have
\begin{multline*}
  \Exlong[F(\vect{a}^{*})]{y} \cdot \left( -\ln\left( \frac{\Ex[F(\vect{a}^{*})]{y}}{\Ex[F(\vect{a})]{y}}\right)\right) -\Exlong[F(\vect{a})]{y} +\Exlong[F(\vect{a}^{*})]{y} +(\sum_{i=1}^n \sqrt{c_i(a_i)})^2 \ge 0 \\
\Longrightarrow 
\sqrt{\Exlong[F(\vect{a}))]{y}}-\sum_{i=1}^n \sqrt{c_i(a_i)} \le \sqrt{\Exlong[F(\vect{a})]{y}} - \sqrt{\Exlong[F(\vect{a})]{y} -\Exlong[F(\vect{a}^{*})]{y} +\Exlong[F(\vect{a}^{*})]{y} \cdot \ln\left( \frac{\Ex[F(\vect{a}^{*})]{y}}{\Ex[F(\vect{a})]{y}}\right)} \\
= \sqrt{\Exlong[F(\vect{a})]{y}} \cdot \left(1-\sqrt{1-\frac{\Ex[F(\vect{a}^{*})]{y}}{\Ex[F(\vect{a})]{y}}+\frac{\Ex[F(\vect{a}^{*})]{y}}{\Ex[F(\vect{a})]{y}} \cdot \ln\left(\frac{\Ex[F(\vect{a}^{*})]{y}}{\Ex[F(\vect{a})]{y}}\right)}\right) \\
\le \sqrt{\Exlong[F(\vect{a})]{y}} \cdot \sqrt{\frac{\Ex[F(\vect{a}^{*})]{y}}{\Ex[F(\vect{a})]{y}}} = \sqrt{\Exlong[F(\vect{a}^*)]{y}},
\end{multline*}
where the last inequality follows from the mathematical fact that $1-\sqrt{1-x+x\cdot \ln x} \le \sqrt{x}$ for $x \in [0,1]$. This finishes the proof of equation~\eqref{eqn:team_degenerate}.
In other words, $\vect{a}^*$ is the maximizer of 
\[
\max_{\vect{a} \in \Acts_0} \left( \sqrt{\Exlong[F(\vect{a})]{y}} -\sum_{i=1}^n \sqrt{c_i(a_i)}\right)^2~.
\]

\paragraph{General case: $\vect{\alpha}^* \ne \vect{0}$.}
According to our construction, the objective value of \eqref{lp3} is
\begin{align*}
& \phantom{=} \int_0^1 \left(1-\beta \cdot \sum_{i=1}^n \alpha_i^*\right) \cdot \left(u(\beta \cdot \vect{\alpha}^*)+\sum_{i=1}^n \frac{\partial u(\beta \cdot \vect{\alpha}^*)}{\partial \alpha_i} \cdot \beta \cdot \alpha_i^*\right) \dd \tilde{G}^*(\beta)\\
& = \int_0^1 \left(1-\beta \cdot \sum_{i=1}^n \alpha_i^*\right) \cdot \left(u(\beta \cdot \vect{\alpha}^*)+\sum_{i=1}^n \frac{\partial u(\beta \cdot \vect{\alpha}^*)}{\partial \alpha_i} \cdot \beta \cdot \alpha_i^*\right) \frac{- \sum_{i=1}^n \alpha_i^*}{\ln(1-\sum_{i=1}^n \alpha_i^*) \cdot (1-\beta \cdot \sum_{i=1}^n \alpha_i^*)} \dd \beta\\
& = \frac{-  \sum_{i=1}^n \alpha_i^*}{\ln(1-\sum_{i=1}^n \alpha_i^*)}  \cdot \int_0^1 \left(\beta \cdot u(\beta \cdot \vect{\alpha}^*)\right)' \dd \beta =  \frac{ (\sum_{i=1}^n \alpha_i^*) \cdot u(\vect{\alpha}^*)}{-\ln(1-\sum_{i=1}^n \alpha_i^*)} \ge \frac{ (\sum_{i=1}^n \alpha_i^*) \cdot \underline{u}(\vect{\alpha}^*)}{-\ln(1-\sum_{i=1}^n \alpha_i^*)},
\end{align*}
where the inequality follows from the constraint, which concludes the proof of the theorem.
\end{proof}

\begin{remark*}
We would like to leave a few comments regarding the above result.
\begin{itemize}
\item It is crucial for us to formalize the program in terms of the ``utility'' function $u(\vect{\alpha})$. It can be interpreted as the utility of the whole team that uniquely determines the costs of each agent and the expected outcome. This approach has been applied in the literature of mechanism design, e.g., the multiple-good monopoly problem~\cite{daskalakis2017strong}. Similar to the single agent setting, we also treat the technology $\Acts$ as a menu for the agents.

\item Our theorem for the team setting is not as strong as Theorem~\ref{thm:main} for the single agent setting. Indeed, we only provide a lower bound on the optimal randomized payoff and is not able to provide a matching upper bound. The technical difficulty is to characterize all possible Nash equilibrium of the induced game among the agents that prevents us to generalize the proof of Lemma~\ref{lem:upper} for the single agent setting. Indeed, Claim~\ref{clm:team} only provides one specific Nash equilibrium of $(F(\vect{\alpha}), \vect{c}(\vect{\alpha}))$ with respect to the linear contract $\vect{w}_{\vect{\alpha}}$. 

On the other hand, our randomized contract is optimal within the current analysis. Specifically, our construction is optimal with respect to \eqref{lp3}. Consider the following function
\[
u^*(\vect{\alpha}) = \left(\frac{\underline{u}(\vect{\alpha}^*)\cdot \sum_{i=1}^n \alpha_i^*}{-\ln(1-\sum_{i=1}^n \alpha_i^*)}\right) \cdot \frac{-\ln(1-\sum_{i=1}^n \alpha_i)}{\sum_{i=1}^n \alpha_i}
\]
that satisfies the constraint $u^*(\vect{\alpha}) \ge \underline{u}(\vect{\alpha})$ for every $\vect{\alpha} \in \{\vect{\alpha} \in \mathbb{R}_{+}^n|\sum_{i=1}^n \alpha_i \le 1\}$ according to the definition of $\vect{\alpha}^*$.
Moreover, the payoff for every $\vect{\alpha}$ is 
\[
\left(1-\sum_{i=1}^n \alpha_i\right) \cdot \left(u^*(\vect{\alpha})+\sum_{i=1}^n \frac{\partial u^*(\vect{\alpha})}{\partial \alpha_i} \cdot \alpha_i\right) \equiv \frac{\underline{u}(\vect{\alpha}^*)\cdot \sum_{i=1}^n \alpha_i^*}{-\ln(1-\sum_{i=1}^n \alpha_i^*)}~.
\]
Therefore, \eqref{lp3} is at most $\frac{\underline{u}(\vect{\alpha}^*)\cdot \sum_{i=1}^n \alpha_i^*}{-\ln(1-\sum_{i=1}^n \alpha_i^*)}$ for any distribution $G$.

We conjecture that our randomized linear contract is optimal among all randomized contracts and leave it for future work.
\end{itemize}
\end{remark*} 
\bibliographystyle{plain}
\bibliography{contract}

\newpage
\appendix

\section{Deterministic Contracts}
\label{app:deterministic}

 Carroll~\cite{carroll2015robustness} demonstrates the optimality of a linear contract for the principal. We use a optimization-based approach to give an alternative proof of this result. Furthermore, this method seamlessly extends to the case with multi-observable outcomes. 

In this section, we abuse the notation $V_{P}(\Acts_0)$ to denote the principal's worst-case payoff by using a deterministic contract:
\[
\max_{w} \min_{\Acts \supseteq \Acts_0} \left( \Exlong[y \sim F(w)]{y-w(y)} \right),
\]
where $(F(w),c(w)) \in \Acts$ is the utility-maximizing action of the agent with respect to contract $w$.

\paragraph{Tie-breaking.} We assume that the agent break ties in the \emph{worst case} but make an extra assumption on the known technology $\Acts_0$. We assume that the following value of $\lambda^*$ is finite: 
\[
\lambda^* \in \argmax_{\lambda} \max_{(F_0,c_0) \in \Acts_0} \left( \frac{\lambda}{\lambda+1}\cdot \Exlong[F_0]{y} - \lambda \cdot c_{0} \right)~.
\]
We remark that for the corner case when $\lambda^* = +\infty$, the optimal contract degenerates to $w^*(y) = 0, \forall y$ for which we necessarily need a best-case tie-breaking rule. We hence omit this technical detail by making the assumption in the following analysis.

\begin{lemma}
    For the single outcome case, given any deterministic contract $w$, the expected payoff for the principal can be captured by the following program.
\begin{align*}
 \inf_{(F,c)}: \quad & \Exlong[F]{y-w(y)} \\
\text{subject to:} \quad & \Exlong[F]{w(y)} -c \ge \max_{(F_0,c_0) \in \Acts_0} \left( \Exlong[F_0]{w(y)} - c_0 \right) \\
& c \ge 0
\end{align*}
\end{lemma}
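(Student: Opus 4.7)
The plan is to establish equality between the infimum of the stated program and the principal's worst-case payoff $V(w) \eqdef \min_{\Acts \supseteq \Acts_0} \Exlong[F(w)]{y - w(y)}$ for the fixed deterministic contract $w$, by proving each quantity bounds the other. Conceptually, the feasible pairs $(F, c)$ of the program are precisely the possible utility-maximizing actions the agent could select under some extension $\Acts \supseteq \Acts_0$ when facing $w$, so that infimizing the principal's payoff over feasible program points is the same as infimizing it over hostile technologies.

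For the direction $\inf \le V(w)$, I fix an arbitrary $\Acts \supseteq \Acts_0$ and let $(F(w), c(w))$ denote the action chosen by the agent under worst-case tie-breaking. Since $(F(w), c(w))$ maximizes the agent's utility over $\Acts \supseteq \Acts_0$, it satisfies $\Exlong[F(w)]{w(y)} - c(w) \ge \max_{(F_0, c_0) \in \Acts_0}\left(\Exlong[F_0]{w(y)} - c_0\right)$, and also $c(w) \ge 0$ because action costs are non-negative by the model. Hence $(F(w), c(w))$ is feasible for the program with objective value equal to the principal's payoff under $\Acts$, so $\inf \le \Exlong[F(w)]{y - w(y)}$. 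Minimizing over $\Acts$ then gives $\inf \le V(w)$.

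For the direction $V(w) \le \inf$, I perform an explicit construction: given any feasible $(F, c)$, set $\Acts := \Acts_0 \cup \{(F, c)\}$. The IR-style constraint certifies that $(F, c)$ attains the maximum utility among actions in $\Acts$, so it is utility-maximizing. Under worst-case tie-breaking the agent picks some utility-maximizer that minimizes the principal's payoff, and this minimum is at most $\Exlong[F]{y - w(y)}$ since $(F, c)$ itself is a candidate. Thus $V(w) \le \Exlong[F]{y - w(y)}$, and taking the infimum over feasible $(F, c)$ completes this direction.

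The only subtlety to manage is the worst-case tie-breaking in the reverse direction: adding $(F, c)$ to $\Acts_0$ may create ties under which the agent actually plays an action in $\Acts_0$ that yields the principal strictly less than $\Exlong[F]{y - w(y)}$; but this only strengthens the inequality we need, since it gives an even smaller principal payoff. I do not expect any genuine obstacle beyond carefully matching feasibility in the program with the standard IR/feasibility conditions for an agent's response, after which both inequalities follow by the pairing described above.
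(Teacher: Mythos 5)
Your proof is correct and takes essentially the same route as the paper's (much terser) justification: the agent's chosen best response under any $\Acts \supseteq \Acts_0$ is feasible for the program, and conversely any feasible $(F,c)$ is realized by the adversarial technology $\Acts_0 \cup \{(F,c)\}$, with worst-case tie-breaking only helping that direction. Your write-up simply makes explicit the two inequalities and the tie-breaking point that the paper's one-paragraph proof leaves informal.
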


\begin{proof}
   The first constraint arises from considerations of individual
rationality and the requirement that the action cost $c \ge 0$ is natural. When the agent chooses the action $(F,c)$, the objective value $\Exlong[F]{y-w(y)}$ represents the assured payoff for the principal. Therefore, we can formulate worst-case analysis as the above minimization problem.
\end{proof}

\begin{theorem}[Theorem 1 of \cite{carroll2015robustness}]
    For the single outcome case, the optimal deterministic contract is linear.
\end{theorem}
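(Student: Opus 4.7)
The plan is to mirror the LP-duality approach used in Section~\ref{sec:lp} for randomized contracts, now applied to the linear inner minimization from the preceding lemma. First I would fix a contract $w$ and abbreviate $\underline{u}(w) \eqdef \max_{(F_0,c_0) \in \Acts_0}\bigl(\Ex[F_0]{w(y)} - c_0\bigr)$. The adversary's problem is an LP in $(F,c)$ with a single IR inequality, the normalization $\sum_y F(y) = 1$, and nonnegativity $F, c \ge 0$. Introducing a multiplier $\lambda \ge 0$ for the IR constraint, the minimization over $c \ge 0$ forces the coefficient of $c$ to be nonnegative and is attained at $c = 0$, while the minimization over $F \ge 0$ collapses to a pointwise constraint; this would yield the dual
\[
\min_{(F,c)}\Ex[F]{y - w(y)} \;=\; \sup_{\lambda \ge 0}\Bigl[\min_{y \in \mathcal{Y}}\bigl(y - (1+\lambda)w(y)\bigr) + \lambda\,\underline{u}(w)\Bigr].
\]
Strong duality is clean here because, after discretizing $\mathcal{Y}$ if necessary, the primal is a finite LP and the dual is strictly feasible.

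Combining this with the outer $\sup_{w \ge 0}$ and exchanging the two suprema would turn the principal's max-min into a max-max,
\[
V_P(\Acts_0) \;=\; \sup_{\lambda \ge 0}\ \sup_{w \ge 0}\Bigl[\min_{y \in \mathcal{Y}}\bigl(y - (1+\lambda)w(y)\bigr) + \lambda\,\underline{u}(w)\Bigr].
\]
For fixed $\lambda$, I would write $t = \min_y\bigl(y - (1+\lambda)w(y)\bigr)$; the only binding constraint on $w$ becomes $w(y) \le (y - t)/(1+\lambda)$, and since $\underline{u}(\cdot)$ is pointwise monotone the supremum is attained by saturating this bound, $w^*(y) = (y - t)/(1+\lambda)$. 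The null action $(\delta_0, 0) \in \Acts_0$ places $0 \in \mathcal{Y}$, and $w^*(0) \ge 0$ forces $t \le 0$; substituting back, the objective reduces to $\max_{(F_0,c_0) \in \Acts_0}\bigl(\tfrac{t + \lambda\Ex[F_0]{y}}{1+\lambda} - \lambda c_0\bigr)$, which is increasing in $t$, so the optimum is at $t = 0$ and the maximizing contract is linear: $w^*(y) = y/(1+\lambda)$.

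Plugging back in, the outer problem reduces to $\sup_{\lambda \ge 0}\max_{(F_0,c_0) \in \Acts_0}\bigl(\tfrac{\lambda}{1+\lambda}\Ex[F_0]{y} - \lambda c_0\bigr)$, which is exactly the defining expression for $\lambda^*$ in the appendix, so the optimal deterministic contract is the linear one with slope $1/(1+\lambda^*)$. The main obstacle is justifying strong LP duality cleanly in the general compact-$\mathcal{Y}$ case (handled by a discretization/approximation argument) and handling the degenerate corner $\lambda^* = \infty$, where the optimal slope degenerates to $0$ and worst-case tie-breaking would be insufficient; the appendix's assumption that $\lambda^*$ is finite rules out exactly this pathology.
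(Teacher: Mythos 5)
Your proposal is correct and follows essentially the same route as the paper's appendix proof: dualize the adversary's inner LP with a multiplier $\lambda$ for the IR constraint (with $c=0$), swap to obtain a max-max problem, observe that for each $\lambda$ the optimizing contract is $w(y)=y/(1+\lambda)$, and conclude with the same strong-duality/minimax caveat and the finite-$\lambda^*$ tie-breaking assumption. The only difference is in execution: where you compute the dual exactly and saturate $w(y)\le (y-t)/(1+\lambda)$, the paper upper-bounds $\inf_{F}\mathcal{L}_{w}(F,\lambda)$ by evaluating the Lagrangian at the mixture $\frac{\lambda}{\lambda+1}F_1+\frac{1}{\lambda+1}\delta_0$, which bounds every contract's payoff by $\max_{(F_0,c_0)\in\Acts_0}\left(\frac{\lambda}{\lambda+1}\Ex[F_0]{y}-\lambda c_0\right)$ without solving the inner minimization, and then verifies achievability by the linear contract via weak duality.
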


\begin{proof}
Given any contract $w$, we consider the Lagrangian dual of the above program.
\begin{align*}
\mathcal{L}_{w}(F,\lambda) & = \Exlong[F]{y-w(y)} + \lambda \cdot \left( \max_{(F_0,c_0) \in \Acts_0} \left( \Exlong[F_0]{w(y)} - c_0 \right) - \Exlong[F]{w(y)} \right) \\
& = \lambda \cdot  \max_{(F_0,c_0) \in \Acts_0} \left( \Exlong[F_0]{w(y)} - c_0 \right) + \Exlong[F]{y - (\lambda+1)\cdot w(y))}
\end{align*}
Since $\mathcal{L}(F,\lambda)$ is linear in both $F$ and $\lambda$, we have that
\begin{multline*}
\inf_{F} \sup_{\lambda} \mathcal{L}_{w}(F,\lambda) = \sup_{\lambda} \inf_{F} \mathcal{L}_{w}(F,\lambda) \le \sup_{\lambda} \mathcal{L}_{w}\left( \frac{\lambda}{\lambda+1}\cdot F_1 + \frac{1}{\lambda+1} \cdot \delta_0, \lambda \right) \\
= \sup_{\lambda}\left(\frac{\lambda}{\lambda+1}\cdot \Exlong[F_1]{y} - \lambda \cdot c_1 -w(0) \right)\le \sup_{\lambda} \max_{(F_0,c_0) \in \Acts_0} \left( \frac{\lambda}{\lambda+1}\cdot \Exlong[F_0]{y} - \lambda \cdot c_{0} \right),
\end{multline*}
where $(F_1,c_1) = \argmax\limits_{(F_0,c_0)\in \Acts_0} \left(\Exlong[F_0]{w(y)} - c_0 \right)$.
Consequently, we have
\begin{align*}
V_P(\Acts_0) \le \sup_{\lambda} \max_{(F_0,c_0) \in \Acts_0} \left( \frac{\lambda}{\lambda+1}\cdot \Exlong[F_0]{y} - \lambda \cdot c_{0} \right) = \max_{(F_0,c_0) \in \Acts_0} \max_{\lambda} \left( \frac{\lambda}{\lambda+1}\cdot \Exlong[F_0]{y} - \lambda \cdot c_{0} \right) \\
= \max_{(F_0,c_0) \in \Acts_0} \left( \sqrt{\Exlong[F_0]{y}} - \sqrt{c_0}\right)^2
\end{align*}
Let $(F^*,c^*) \in \Acts_0$ to be the maximizer of the above quantity. 

On the other hand, the upper bound is achievable by a linear contract. Specifically, define $\lambda^* \eqdef \sqrt{\frac{\Ex[F^*]{y}}{c^*}} - 1$ and $w^*(y) \eqdef \frac{y}{\lambda^* + 1} = \sqrt{\frac{c^*}{\Ex[F^*]{y}}} \cdot y$. According to our assumption, $\lambda^* < +\infty$.
Then we have the following.
\begin{multline*}
V_P(\Acts_0) \ge \inf_{F} \mathcal{L}_{w^*}(F,\lambda^*) =\inf_{F} \left(\lambda^* \cdot  \max_{(F_0,c_0) \in \Acts_0} \left( \frac{\Ex[F_0]{y}}{\lambda^*+1} - c_0 \right) + \Exlong[F]{y - (\lambda^*+1)\cdot \frac{y}{\lambda^*+1}}\right) \\
= \lambda^* \cdot  \max_{(F_0,c_0) \in \Acts_0} \left( \frac{\Ex[F_0]{y}}{\lambda^*+1} - c_0 \right) \ge \lambda^*\cdot \left( \frac{\Ex[F^*]{y}}{\lambda^*+1} - c^* \right) = \left( \sqrt{\Exlong[F^*]{y}} - \sqrt{c^*}\right)^2,
\end{multline*}
which concludes that the linear contract $w^*$ is optimal.

\end{proof}

\subsection{Extensions to multi-observable outcomes}
\paragraph{Multi-observable outcomes.} In this setting, we assume the observable outcome is a $k$-dimension vector $\vect{y} \in \mathcal{Y} \subseteq \mathbb{R}^k$, where the first entry $y_1$ is the monetary reward the principal can receive. A contract is a multivariate function $w: \mathcal{Y} \to \mathbb{R}$ and the principal's payoff is $y_1-w(\vect{y})$. In addition, we are given a convex function $b:\mathbb{R}^k \to \mathbb{R}$ as a lower bound on the cost of each action. Specifically, the technology $\Acts$ satisfies that for every action $(F,c) \in \Acts$, $c \ge b\left(\Exlong[F]{\vect{y}}\right)$. Naturally, the technology $\Acts_0$ known to the principal adheres to this constraint.

Similar to the previous analysis, we use $V_P(\Acts_0)$ to denote the principal's worst-case expected payoff
\[
\max_{w} \min_{\Acts \supseteq \Acts_0} \left( \Exlong[\vect{y} \sim F(w)]{y_1-w(\vect{y})} \right)~.
\]

\paragraph{Tie-breaking.} We also assume that the agent break ties in the \emph{worst case} but make an extra assumption on the known technology $\Acts_0$. We assume that the following value of $\lambda^*$ is finite: 
\[
\lambda^* \in \argmax_{\lambda} \max_{(F_0,c_0) \in \Acts_0} \left( \frac{\lambda}{\lambda+1}\cdot \Exlong[F_0]{y_1} - \lambda \cdot c_0+\lambda \cdot b\left(\frac{\lambda}{\lambda+1}\Exlong[F_0]{\vect{y}}\right) \right)~.
\]
Similarly, for the corner case when $\lambda^* = +\infty$, the optimal contract degenerates and we have to use a best-case tie-breaking rule.

We also formulate the following optimization problem to express the principal's expected payoff. 

\begin{lemma}
For the multi-observable outcomes, given any deterministic contract $w$, the expected payoff for the principal can be captured by the following program.
\begin{align*}
 \inf_{(F,c)}: \quad & \Exlong[F]{y_1-w(\vect{y})} \\
\text{subject to:} \quad & \Exlong[F]{w(\vect{y})} -c \ge \max_{(F_0,c_0) \in \Acts_0} \left( \Exlong[F_0]{w(\vect{y})}-c_0  \right) \\
& c \ge b\left(\Exlong[F]{\vect{y}}\right)
\end{align*}

\end{lemma}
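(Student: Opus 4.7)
The plan is to show two inequalities between the principal's worst-case expected payoff under contract $w$ and the value of the displayed infimum program, by directly translating the game-theoretic semantics (adversary picks $\Acts$, agent best-responds with worst-case tie-breaking) into feasibility of $(F,c)$ in the program.

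First, I will unpack the definition: for a fixed deterministic contract $w$, the principal's payoff is
\[
V_P(w,\Acts_0) \;=\; \inf_{\Acts \supseteq \Acts_0} \; \inf_{(F,c) \in \argmax_{(F',c') \in \Acts}\,(\Exlong[F']{w(\vect{y})}-c')} \Exlong[F]{y_1-w(\vect{y})},
\]
where the inner infimum reflects worst-case tie-breaking. I will denote the value of the displayed program by $V^{\mathrm{LP}}(w)$ and prove $V_P(w,\Acts_0)=V^{\mathrm{LP}}(w)$ by two directions.

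For the direction $V_P(w,\Acts_0) \le V^{\mathrm{LP}}(w)$, I will exhibit an adversary strategy achieving any target feasible solution. Fix any $(F,c)$ feasible for the program. Set $\Acts \eqdef \Acts_0 \cup \{(F,c)\}$. This $\Acts$ is a valid technology: every action in $\Acts_0$ already satisfies the $b(\cdot)$-lower bound by assumption on $\Acts_0$, and for the added action $(F,c)$ the second program constraint is exactly $c \ge b(\Exlong[F]{\vect{y}})$. The first program constraint guarantees that the agent's utility from $(F,c)$ dominates every action in $\Acts_0$, so $(F,c) \in \argmax_{(F',c') \in \Acts}(\Exlong[F']{w(\vect{y})}-c')$, and with worst-case tie-breaking the adversary is free to point the agent at $(F,c)$, yielding principal payoff at most $\Exlong[F]{y_1-w(\vect{y})}$. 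Taking the infimum over all program-feasible $(F,c)$ gives $V_P(w,\Acts_0) \le V^{\mathrm{LP}}(w)$.

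For the reverse direction $V_P(w,\Acts_0) \ge V^{\mathrm{LP}}(w)$, I will show that any adversary outcome corresponds to a feasible point of the program. Fix arbitrary $\Acts \supseteq \Acts_0$ and any $(F,c)$ in the worst-case argmax inside $\Acts$. Since $\Acts_0 \subseteq \Acts$ and $(F,c)$ maximizes utility over $\Acts$, we have $\Exlong[F]{w(\vect{y})}-c \ge \max_{(F_0,c_0)\in \Acts_0}(\Exlong[F_0]{w(\vect{y})}-c_0)$, which is the first constraint; and $(F,c)\in \Acts$ together with the standing assumption on the technology yields $c \ge b(\Exlong[F]{\vect{y}})$, which is the second constraint. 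Hence $(F,c)$ is feasible for the program and the principal receives exactly $\Exlong[F]{y_1-w(\vect{y})}$, so the infimum over adversary strategies is at least $V^{\mathrm{LP}}(w)$.

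The only subtle step is the worst-case tie-breaking in the first direction: I want the adversary to actually be able to implement the particular $(F,c)$ even when $\Acts_0$ contains tied actions. Since the inner minimum in $V_P(w,\Acts_0)$ ranges over the whole argmax set and we have constructed $\Acts$ so that $(F,c)$ lies in that argmax, this inner minimum is by definition $\le \Exlong[F]{y_1-w(\vect{y})}$, which is all that is needed. Beyond this, the argument is a clean translation between the adversary game and the program's constraints, with no calculation; I do not expect other obstacles.
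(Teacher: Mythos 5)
Your proof is correct and takes essentially the same approach as the paper, which simply observes that the two constraints encode the agent's incentive against $\Acts_0$ and the standing cost lower bound $c \ge b(\Exlong[F]{\vect{y}})$. You are in fact more careful than the paper's one-line justification, since you verify both directions (adversary can realize any feasible $(F,c)$ via $\Acts_0 \cup \{(F,c)\}$, and any realized action is feasible), which is exactly the intended content.
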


\begin{proof}
The constraints mirror those of the single-outcome case. The only modification required is to update the objective value to $\Exlong[F]{y_1-w(\vect{y})}$, thereby completing the construction of this program.
\end{proof}

\begin{theorem}[Theorem 2 of \cite{carroll2015robustness}]
    For the multi-observable outcomes, the optimal deterministic contract is linear.
\end{theorem}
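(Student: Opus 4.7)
The plan is to mirror the single-outcome Lagrangian proof above, extending it to accommodate the convex cost lower bound $b$. Using the preceding lemma, I would write the adversary's problem as a convex minimization in $(F,c)$ and form its Lagrangian with multipliers $\lambda \ge 0$ for the IC constraint and $\mu \ge 0$ for the cost constraint,
\[
\mathcal{L}(F, c, \lambda, \mu) = \Exlong[F]{y_1 - (\lambda+1) w(\vect{y})} + (\lambda - \mu) c + \lambda\, \underline{u}(w) + \mu \, b\!\left(\Exlong[F]{\vect{y}}\right),
\]
where $\underline{u}(w) = \max_{(F_0, c_0) \in \Acts_0}\left(\Exlong[F_0]{w(\vect{y})} - c_0\right)$. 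Taking the infimum over $c$ forces $\mu = \lambda$, leaving the reduced Lagrangian
\[
\mathcal{L}_w(F, \lambda) = \Exlong[F]{y_1 - (\lambda+1) w(\vect{y})} + \lambda\, \underline{u}(w) + \lambda\, b\!\left(\Exlong[F]{\vect{y}}\right),
\]
which is convex in $F$ (since $b$ is convex) and linear in $\lambda$, so the minimax theorem gives $V_P(\Acts_0) \le \max_w \sup_\lambda \inf_F \mathcal{L}_w(F, \lambda)$.

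For the upper bound I would substitute the mixture $F_t = t F_0 + (1-t)\delta_{\vect{0}}$ with $t = \frac{\lambda}{\lambda+1}$ and $(F_0, c_0) \in \Acts_0$ the argmax defining $\underline{u}(w)$. A direct computation, using $(\lambda+1)t = \lambda$, $(\lambda+1)(1-t) = 1$, and $w(\vect{0}) \ge 0$, yields
\[
\mathcal{L}_w(F_t, \lambda) \le \frac{\lambda}{\lambda+1}\Exlong[F_0]{y_1} - \lambda c_0 + \lambda\, b\!\left(\frac{\lambda}{\lambda+1}\Exlong[F_0]{\vect{y}}\right).
\]
Taking the supremum over $\lambda$ and the maximum over $(F_0, c_0) \in \Acts_0$, together with the definition of $\lambda^*$, delivers $V_P(\Acts_0) \le \psi^*$, where $\psi^* \eqdef \frac{\lambda^*}{\lambda^*+1}\Ex[F^*]{y_1} - \lambda^* c^* + \lambda^* b\!\left(\frac{\lambda^*}{\lambda^*+1}\Ex[F^*]{\vect{y}}\right)$ and $(F^*, c^*)$ is the joint argmax in $\Acts_0$.

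For the lower bound I would exhibit the linear contract $w^*(\vect{y}) = \frac{y_1}{\lambda^*+1}$ with slope $\alpha^* = \frac{1}{\lambda^*+1}$. Under $w^*$ the adversary's problem collapses to minimizing $(1-\alpha^*)\Ex[F]{y_1}$ subject to $\alpha^* \Ex[F]{y_1} - b\!\left(\Ex[F]{\vect{y}}\right) \ge \underline{u}(w^*)$, so for any feasible $F$,
\[
(1-\alpha^*)\Exlong[F]{y_1} = \lambda^* \alpha^*\Exlong[F]{y_1} \ge \lambda^*\, \underline{u}(w^*) + \lambda^*\, b\!\left(\Exlong[F]{\vect{y}}\right) \ge \lambda^*\left(\alpha^* \Ex[F^*]{y_1} - c^*\right) + \lambda^*\, b\!\left(\Exlong[F]{\vect{y}}\right).
\]
The remaining gap to $\psi^*$ is the inequality $b\!\left(\Ex[F]{\vect{y}}\right) \ge b(\vect{e}^*)$ with $\vect{e}^* \eqdef (1-\alpha^*)\Ex[F^*]{\vect{y}}$, which I would establish by combining the first-order stationarity condition $c^* = \alpha^{*2}\Ex[F^*]{y_1} + b(\vect{e}^*) + \alpha^*\nabla b(\vect{e}^*)\cdot \vect{e}^*$ (from $\partial_\lambda \psi|_{\lambda^*} = 0$) with the subgradient inequality for convex $b$ and the boundary $b(\vect{0}) \le 0$ implied by the null action.

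The main obstacle is exactly this last matching step. In the single-outcome case $b \equiv 0$ makes the gap vanish trivially; here one must certify that $\vect{e}^*$ minimizes $b$ over the adversary's feasible set $\{\vect{e} : \alpha^* e_1 - b(\vect{e}) \ge \underline{u}(w^*)\}$, which rests delicately on the joint optimality of $(F^*, c^*, \lambda^*)$ and the convex structure of $b$. Concretely, one would show that the mixture $F^{\ddagger} = (1-\alpha^*) F^* + \alpha^* \delta_{\vect{0}}$ is feasible for the adversary and achieves principal payoff exactly $\psi^*$, with feasibility following from the stationarity identity above together with $\nabla b(\vect{e}^*)\cdot \vect{e}^* \ge b(\vect{e}^*) - b(\vect{0}) \ge 0$; promoting this witness to a uniform lower bound over all adversary responses is the core technical challenge in extending the single-outcome argument to the multi-observable setting.
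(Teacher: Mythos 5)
Your upper-bound half is exactly the paper's argument: eliminate $c$, form the single-multiplier Lagrangian $\mathcal{L}_w(F,\lambda)=\Exlong[F]{y_1-(\lambda+1)w(\vect{y})}+\lambda\,\underline{u}(w)+\lambda\, b\left(\Exlong[F]{\vect{y}}\right)$, swap $\inf_F$ and $\sup_\lambda$, and evaluate at the mixture $\frac{\lambda}{\lambda+1}F_1+\frac{1}{\lambda+1}\delta_{\vect 0}$, dropping $-w(\vect 0)\le 0$. That part is fine and matches the paper line by line.

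The lower-bound half, however, has a genuine gap, and it is precisely the step you flag. You propose the contract $w^*(\vect{y})=\frac{y_1}{\lambda^*+1}$, which pays only on the monetary coordinate, and then need $b\left(\Exlong[F]{\vect{y}}\right)\ge b(\vect{e}^*)$ for every adversary-feasible $F$. This is not a technical hurdle to be overcome; it is false in general. The feasibility condition under your contract only constrains $\alpha^*\Exlong[F]{y_1}-b\left(\Exlong[F]{\vect{y}}\right)\ge\underline{u}(w^*)$, so the adversary is free to move the mean vector along non-$y_1$ directions where the convex function $b$ decreases below $b(\vect{e}^*)$; this makes low-cost actions with small $\Exlong[F]{y_1}$ feasible and drives the principal's payoff below $\psi^*$ (the $\lambda^* b(\vect{e}^*)$ term can never be certified). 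The stationarity identity and $b(\vect 0)\le 0$ do not repair this, because nothing forces $\vect{e}^*$ to minimize $b$ over the feasible set. The paper's construction is different in exactly the needed way: it takes the supporting hyperplane $p(\vect{y})=\sum_{i=1}^k p_i y_i+\beta$ of $b$ at $\frac{\lambda^*}{\lambda^*+1}\Exlong[F^*]{\vect{y}}$ and uses the linear contract
\begin{equation*}
w^*(\vect{y})=\frac{1}{\lambda^*+1}\left(y_1+\lambda^*\sum_{i=1}^k p_i y_i\right),
\end{equation*}
i.e., a contract that is linear in the \emph{full} outcome vector with coefficients aligned to the gradient of $b$. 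Then in $\inf_F\mathcal{L}_{w^*}(F,\lambda^*)$ the term $\Exlong[F]{y_1-(\lambda^*+1)w^*(\vect{y})}=-\lambda^*\sum_i p_i\Exlong[F]{y_i}$ cancels against $\lambda^* b\left(\Exlong[F]{\vect{y}}\right)\ge\lambda^* p\left(\Exlong[F]{\vect{y}}\right)$ via the subgradient inequality, leaving an $F$-independent bound that equals $\psi^*$ by tangency at $\frac{\lambda^*}{\lambda^*+1}\Exlong[F^*]{\vect{y}}$. Without tilting the contract by the tangent-plane coefficients $p_i$, your lower bound cannot match the upper bound, so the proof as proposed does not go through.
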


\begin{proof}

Given any contract $w$, consider the Lagrangian dual of the above program.
\begin{align*}
\mathcal{L}_{w}(F,\lambda) & = \Exlong[F]{y_1-w(\vect{y})} + \lambda \cdot \left( \max_{(F_0,c_0) \in \Acts_0} \left( \Exlong[F_0]{w(\vect{y}}-c_0  \right) - \Exlong[F]{w(\vect{y})}+b\left(\Exlong[F]{\vect{y}}\right) \right) \\
& = \lambda \cdot  \max_{(F_0,c_0) \in \Acts_0} \left( \Exlong[F_0]{w(\vect{y}}-c_0\right) + \Exlong[F]{y_1 - (\lambda+1)\cdot w(\vect{y}))}+\lambda \cdot b\left(\Exlong[F]{\vect{y}}\right)
\end{align*}
Since $\mathcal{L}(F,\lambda)$ is linear in both $F$ and $\lambda$, we have that
\begin{multline*}
\inf_{F} \sup_{\lambda} \mathcal{L}_{w}(F,\lambda) = \sup_{\lambda} \inf_{F} \mathcal{L}_{w}(F,\lambda) \le \sup_{\lambda} \mathcal{L}_{w}\left( \frac{\lambda}{\lambda+1}\cdot F_1 + \frac{1}{\lambda+1} \cdot \delta_0, \lambda \right) \\
= \sup_{\lambda} \left( \frac{\lambda}{\lambda+1}\cdot \Exlong[F_1]{y_1} - \lambda \cdot c_1+\lambda \cdot b\left(\frac{\lambda}{\lambda+1}\Exlong[F_1]{\vect{y}}\right) -w(0) \right) \\
\le \sup_{\lambda} \max_{(F_0,c_0) \in \Acts_0} \left( \frac{\lambda}{\lambda+1}\cdot \Exlong[F_0]{y_1} - \lambda \cdot c_0+\lambda \cdot b\left(\frac{\lambda}{\lambda+1}\Exlong[F_0]{\vect{y}}\right) \right),
\end{multline*}
where $(F_1,c_1) = \argmax\limits_{(F_0,c_0)\in \Acts_0} \left( \Exlong[F_0]{w(\vect{y}}-c_0\right)$.

Consequently, for any contract $w$, there holds 
\begin{align}
\label{upbound}
V_P(\Acts_0) & \le \sup_{\lambda} \max_{(F_0,c_0) \in \Acts_0} \left( \frac{\lambda}{\lambda+1}\cdot \Exlong[F_0]{y_1} - \lambda \cdot c_0+\lambda \cdot b\left(\frac{\lambda}{\lambda+1}\Exlong[F_0]{\vect{y}}\right) \right).
\end{align}
According to our assumption, the supreme is achieved at a finite $\lambda^*$. Let $(F^*,c^*) \in \Acts_0$ be the corresponding maximizer of the inner maximization.

We claim the upper bound can be achieved by a linear contract. Specifically, denote $p(\vect{y})= \sum_{i=1}^k p_i y_i +\beta$ as  the tangent plane of $b(\cdot)$ at $\frac{\lambda^*}{\lambda^*+1}\Exlong[F^*]{\vect{y}}$. Then, we have
\[
b(\vect{y}) \ge p(\vect{y}) = \sum_{i=1}^k p_i y_i +\beta
\]
and
\[
b\left(\frac{\lambda^*}{\lambda^*+1}\Exlong[F^*]{\vect{y}}\right)= p\left(\frac{\lambda^*}{\lambda^*+1}\Exlong[F^*]{\vect{y}}\right) = \frac{\lambda^*}{\lambda^*+1} \cdot \sum_{i=1}^k p_i \Exlong[F^*]{y_i}+\beta.
\]
Consider the linear contract $w^*(\vect{y}) \eqdef \frac{1}{\lambda^*+1} \cdot \left( y_1+\lambda^* \cdot \sum_{i=1}^k p_i y_i\right)$. Then we have the following.
\begin{multline*}
V_{P}(\Acts_0) \ge \inf_{F} \mathcal{L}_{w^*}(F,\lambda^*) \\
= \inf_{F} \left( \lambda^* \cdot  \max_{(F_0,c_0) \in \Acts_0} \left( \Exlong[F_0]{w^*(\vect{y})}-c_0\right) + \Exlong[F]{y_1 - (\lambda^*+1)\cdot w^*(\vect{y})}+\lambda^* \cdot b\left(\Exlong[F]{\vect{y}}\right) \right) \\
\ge \inf_{F} \left( \lambda^* \cdot   \left( \Exlong[F^*]{w^*(\vect{y})}-c^*\right) - \lambda^* \cdot \Exlong[F]{ \sum_{i=1}^k p_i y_i}+\lambda^* \cdot \left( \sum_{i=1}^k p_i \Exlong[F]{y_i} + \beta\right) \right)\\
=\lambda^* \cdot   \left( \frac{1}{\lambda^*+1} \Exlong[F^*]{y_1} +\frac{\lambda^*}{\lambda^*+1} \sum_{i=1}^k p_i \Exlong[F^*]{y_i} - c^*+\beta \right)\\
= \frac{\lambda^*}{\lambda^*+1}\cdot \Exlong[F^*]{y_1} - \lambda^* \cdot c^*+\lambda^* \cdot b\left(\frac{\lambda^*}{\lambda^*+1}\Exlong[F^*]{\vect{y}}\right),
\end{multline*}
which concludes that the linear contract $w^*$ is optimal.

\end{proof} 
\end{document}